\documentclass[11pt]{scrartcl}
\usepackage[a4paper, total={16cm, 24cm}]{geometry}
\usepackage{amsmath, amsthm, amssymb}
\usepackage{microtype}
\usepackage{mathtools}
\usepackage{caption}
\usepackage{wrapstuff}
\usepackage{subcaption}

\usepackage{tikz}
\usetikzlibrary{arrows.meta, calc}

\usepackage{todonotes}
\usepackage[backend=biber,style=authoryear,natbib=true,maxbibnames=99,maxcitenames=2,backref=true]{biblatex}
\usepackage{hyperref}
\hypersetup{
    pdfencoding=auto,
    psdextra,
    colorlinks=true,
    citecolor=green!50!black,
    linkcolor=red!60!black,
    urlcolor=blue!90!black
}
\usepackage{aliascnt}
\usepackage[nameinlink]{cleveref}
\Crefname{appsec}{Appendix}{Appendices}
\usepackage{subcaption}
\usepackage{authblk}

\newtheorem{theorem}{Theorem}[section]

\newaliascnt{lemma}{theorem}
\newtheorem{lemma}[lemma]{Lemma}
\aliascntresetthe{lemma}

\newaliascnt{corollary}{theorem}

\aliascntresetthe{corollary}

\newaliascnt{proposition}{theorem}
\newtheorem{proposition}[proposition]{Proposition}
\aliascntresetthe{proposition}

\newaliascnt{fact}{theorem}

\aliascntresetthe{fact}

\theoremstyle{definition}

\newaliascnt{definition}{theorem}
\newtheorem{definition}[definition]{Definition}
\aliascntresetthe{definition}

\newaliascnt{example}{theorem}

\aliascntresetthe{example}

\theoremstyle{remark}

\newaliascnt{remark}{theorem}
\newtheorem{remark}[remark]{Remark}
\aliascntresetthe{remark}

\crefname{theorem}{Theorem}{Theorems}
\Crefname{theorem}{Theorem}{Theorems}

\crefname{lemma}{Lemma}{Lemmata}
\Crefname{lemma}{Lemma}{Lemmata}

\crefname{corollary}{Corollary}{Corollaries}
\Crefname{corollary}{Corollary}{Corollaries}

\crefname{proposition}{Proposition}{Propositions}
\Crefname{proposition}{Proposition}{Propositions}

\crefname{fact}{Fact}{Facts}
\Crefname{fact}{Fact}{Facts}

\crefname{definition}{Definition}{Definitions}
\Crefname{definition}{Definition}{Definitions}

\crefname{example}{Example}{Examples}
\Crefname{example}{Example}{Examples}

\crefname{remark}{Remark}{Remarks}
\Crefname{remark}{Remark}{Remarks}

\newcommand{\R}{\mathbb{R}}

\newcommand{\N}{\mathbb{N}}

\title{Existence of the Core in \\ Approval-Based Committee Elections}

\author[1]{Patrick Becker}
\author[2]{Matthias Greger}
\author[3]{Dominik Peters}
\affil[1]{Technical University of Munich}
\affil[2]{University of Oxford}
\affil[3]{CNRS, LAMSADE, Universit\'e Paris Dauphine - PSL}
\date{\vspace{-1cm}}

\begin{document}
	
\maketitle
	
\begin{abstract}
	We settle the main open question in the theory of approval-based multi-winner elections: we show that there always exists a committee in the core. The core is a stability and group fairness concept. The proof introduces a new voting rule that optimizes an entropy-like objective function over committees and payment systems. All local optima of this objective function lie in the core, which implies that a core committee can be found in polynomial time.
\end{abstract}
	
\section{Introduction}
Approval-based multiwinner elections ask how to select a fixed-size committee $W$ of $k$ candidates from a set $C$, given the approval preferences of a set $N$ of voters. Each voter $i\in N$ submits an approval set $A_i\subseteq C$, and derives utility
$u_i(W)=|A_i\cap W|$ from a committee $W$. A central objective in this setting is \emph{proportional representation}: sufficiently large groups of voters with shared interests should receive representation proportional to their size.

One of the strongest formalizations of this principle is \emph{core stability}, introduced to approval-based committee elections by \citet{ABC+16a} and rooted in the classical core from cooperative game theory; see, e.g., \citet{shapley1955markets}. The core formalizes proportionality through coalitional deviations. A coalition $S\subseteq N$ may pool its proportional share of the available seats and propose an alternative set of candidates $T$. The committee $W$ is blocked if
\[
    \lvert S \rvert \ge \frac{n}{k} \lvert T \rvert \qquad\text{and}\qquad
    \lvert A_i\cap T \rvert > \lvert A_i\cap W\rvert \quad\text{for every }i\in S.
\]
A committee belongs to the core if no such blocking coalition exists. In contrast to other well-studied proportionality notions such as Extended Justified Representation (EJR), the core does not require the members of a coalition to agree on the same candidates: it only asks whether a group controlling a proportional share of the committee could jointly choose an alternative that makes every member strictly better off. In this sense, core stability provides a particularly demanding notion of proportional representation.

\paragraph{Related work.}
The non-emptiness question for the core has remained a central open problem \citep{LaSk23a}, and much of the literature on proportional representation in approval-based committee elections can be viewed as approaching it from different directions.

One way of doing so is to define restricted versions of the core that limit the structure of deviating coalitions to those that satisfy additional cohesiveness conditions. This includes Justified Representation (JR), Extended Justified Representation (EJR), and Full Justified Representation (FJR) (see \citet{ABC+16a, PPS21a}).
Recent work has substantially strengthened the understanding of these notions: \citet{frank2026polynomial} give a polynomial-time algorithm rule to satisfy FJR, while \citet{teh2026strengthening} introduces FJR+, a strengthening of FJR that admits both polynomial-time verification and computation.

Another approach relaxes core stability directly. \citet{PeSk20a} show that Proportional Approval Voting (PAV) gives a factor-2 approximation to the core, while the Method of Equal Shares (MES) provides a logarithmic one.
Other work aimed at bounding the size of the deviating coalition has provided constant-factor approximations for general monotone preferences, stable lotteries, and, most recently, a $3.65$-approximation based on Lindahl equilibria \citep{fain2018fair, jiang2020approximately, cheng2020group, gao2025computation}.

There has also been substantial progress on exact core existence in restricted settings. \citet{peters2025fewseats} proves that PAV always selects at least one core committee when the committee size is at most 8, and proves existence when the number of candidates is at most fifteen, while \citet{becker2026core} prove existence for elections with up to seven voter types and \citet{goel2026nash} with up to eight voters.
\citet{pierczynski2022core} show the core exists for one-dimensional restricted domains such as candidate interval. \citet{BGP+24a} show that PAV satisfies the core when every candidate has at least $k$ copies.

\paragraph{Results.}
In this paper, we solve the long-standing open question in the affirmative. Every election instance admits a core-stable committee. While the core is most often stated in terms of the Hare quota $q = n/k$, our proof also shows core existence for the Droop quota $n/(k+1)$. 
In addition, we show that a core committee w.r.t. the Hare quota can always be computed in polynomial time.

The result that a committee in the Hare core exists for every instance has been formally verified in Lean as part of the \href{https://github.com/DominikPeters/ABCVotingLean}{ABCVotingLean} project.

\paragraph{Approach.}
We propose a new voting rule and prove that its output committees lie in the core. The rule and its proof combine ideas from all the greatest hits of the ABC voting literature: it solves a global optimization with an objective that is somewhat related to the objective of PAV but generalizes the optimization over combinations of committees and payment systems. This allows us to combine the global structure of a welfare-maximizing rule like PAV (which satisfies the core or its relaxations in many special cases) with the flexibility of payment-based methods (such as MES and the Sequential Phragm\'en rule). The objective is chosen such that we can bound the effect of candidate additions and deletions. This allows for a proof is based on a PAV-like swapping argument.

Our analysis of individual payment methods allows us to further strengthen core to core+ (mirroring a recent strengthening of FJR to FJR+, proposed by \citet{teh2026strengthening}). Core+ admits a payment-based certificate, which we construct as part of our existence proof. This strengthening is related to an earlier notion called Lindahl priceability, inspired by the concept of \emph{Lindahl equilibrium}, which is a virtual market equilibrium for public goods that works for divisible settings (where fractional committees are allowed) \citep{munagala2022auditing}. Recent work by \citet{KrPe25b} has introduced a convex program whose optimal solutions are Lindahl equilibria (and therefore satisfy the fractional version of the core). In the approval-based committee setting, this convex program can be interpreted as selecting a committee that maximizes the Shannon entropy of voter payments. Our rule closes the remaining gap to integral committee selection by introducing a new analog of Shannon entropy, which we call \emph{harmonic entropy}. It selects exactly those committees that admit a payment system maximizing harmonic entropy.

Our proof implies that even local optima of our harmonic entropy objective satisfy core+. Together with a linear-program simplification of the objective function, we can deduce that it is possible to find a core+ committee in polynomial time.

\section{Preliminaries}

\paragraph{Model.}
For $t \in \mathbb{N}$, define $[t]:=\{1, \dots, t\}$. Let $N=[n]$ be a set of \emph{voters}, let $C = \{c_1, \dots, c_m\}$ be a finite set of \emph{candidate}, and denote the \emph{committee size} as $k$ with $1\le k\le \lvert C\rvert$.
Voter $i$ approves $A_i\subseteq C$ and has approval utilities, i.e., $u_i(W)=\lvert A_i\cap W \rvert$ for $W\subseteq C$. 
An approval-based election \emph{instance} is denoted by $I = (A, k)$ where $A = (A_i)_{i \in N}$ is the approval profile.
We denote the \emph{quota} by $q$, where $\frac{n}{k+1} < q \le \frac{n}{k}$. The quota specifies how many voters are needed to ``deserve'' one candidate. The default choice is the \emph{Hare quota} given by $q=n/k$.

\paragraph{Further notation.}
We use the shorthand notation $W+c=W\cup\{c\}$ and $W-c=W\setminus\{c\}$.
Let $\mathcal W \coloneqq \binom{C}{k}$ be the set of size-$k$ committees. For $d \in \N$, let
\[
    \Delta^d \coloneqq \left\{ x\in\R_+^d : \textstyle\sum_{j=1}^d x_j=1 \right\}
\]
denote the $(d-1)$-dimensional probability simplex.
Finally, $\R_+$ denotes the non-negative reals and $(a)_+ = \max\{a, 0\}$ is the positive part of $a \in \R$.

\section{The core and core+}
We now define the core and introduce a strengthened version of it, which we will call core+. The latter property is a fractional relaxation of core; it is inspired by the recently-introduced property of FJR+ \citep{teh2026strengthening}, which is a similar relaxation of the property FJR.

Let $W \in \mathcal W$ be a committee. A non-empty coalition $S\subseteq N$ \emph{blocks} $W$ if there exists a proposal $T\subseteq C$ such that $\lvert S \rvert \ge q \cdot \lvert T\rvert$ (the coalition can ``afford'' $T$) and
\[
    u_i(T) > u_i(W) \qquad \text{for every } i\in S.
\]
A committee $W\in\mathcal W$ is \emph{in the core} if it is not blocked by any non-empty coalition $S\subseteq N$. 

We can express the condition of a committee $W$ failing to be in the core as a feasibility system of linear inequalities using binary variables. Introduce a binary variable $y_i \in \{0,1\}$ denoting whether $i \in S$ and a binary variable $z_c \in \{0,1\}$ denoting whether $c \in T$. Then $W$ satisfies the core if and only if the following system is infeasible:
\begin{equation}
    \label{eq:program-core-fail}
    y_i,z_c\in\{0,1\};\quad \sum_{c\in A_i}z_c\ge (u_i(W) + 1) y_i \ \forall i\in N;\quad \sum_{i \in N} y_i\ge q\sum_{c \in C} z_c;\quad \sum_{i\in N} y_i>0.
\end{equation}
Next, we introduce an equivalent system by introducing additional \emph{assignment variables} $x_{ic} \in \{0,1\}$ for all $i \in N$ and $c \in A_i$ into the system. The interpretation will be as follows: each $i \in S$ needs to approve at least $u_i(W) + 1$ candidates in $T$; choose $u_i(W) + 1$ such candidates and \emph{assign} $i$ to each of them (by setting $x_{ic} = 1$). This gives the following \emph{assignment constraints}, which we will denote by $\mathsf{AC}(x,y,z,W)$ for short since they will occur repeatedly in the following systems.
\[
\mathsf{AC}(x,y,z,W):\quad \sum_{c \in A_i} x_{ic}=(u_i(W) + 1) y_i \ \forall i\in N;\quad x_{ic}\le z_c;\quad x_{iw}\le y_i\ \forall w\in W \cap A_i.
\]
The constraints ``$x_{iw}\le y_i$'' are redundant; we include them because they will be useful in the fractional relaxation later.

Now we can write an equivalent version of \eqref{eq:program-core-fail} as follows:
\begin{equation}
    \label{eq:program-core-assignment-fail}
    x_{ic},y_i,z_c\in\{0,1\};\quad \mathsf{AC}(x,y,z,W);\quad \sum_{i \in N} y_i\ge q\sum_{c \in C} z_c;\quad \sum_{i\in N} y_i>0.
\end{equation}
A committee $W$ is in the core if and only if \eqref{eq:program-core-assignment-fail} has no solution.

We now introduce a key new notion which we will call core+; it strengthens core and is obtained by relaxing the integrality constraints in \eqref{eq:program-core-assignment-fail}.

\begin{definition}
    A committee $W \in \mathcal W$ satisfies \textit{core+} if the following system has no solution:
    \begin{equation}
        \label{eq:program-core+-fail}
        x_{ic},y_i,z_c\in[0,1];\quad \mathsf{AC}(x,y,z,W);\quad \sum_{i \in N} y_i\ge q\sum_{c \in C} z_c;\quad \sum_{i\in N} y_i>0.
    \end{equation}
\end{definition}

Note that core+ is stronger than core because \eqref{eq:program-core+-fail} is easier to satisfy than \eqref{eq:program-core-assignment-fail}. One can interpret it as allowing even fractional deviations, and allowing voters to only fractionally participate in a blocking coalition. Note that it can be checked if a given committee is core+ by solving a linear program.

We have defined core+ in terms of what is \emph{not} allowed to happen (i.e., there must not be a solution to \eqref{eq:program-core+-fail}). For purposes of proving existence, it is much more convenient to have a positive definition that asks for a certificate establishing core+. Fortunately, since \eqref{eq:program-core+-fail} is a linear system with continuous variables, a positive equivalent definition can be obtained by using Farkas lemma. The resulting certificate is closely related to the well-studied priceability axiom.

\begin{definition}\label{def:payment_system}
    A \emph{payment system} for a committee $W$ is given by non-negative voter payments $(p_{ic})_{i \in N, c\in A_i\cap W}$ and non-negative reserves $(r_i)_{i \in N}$ satisfying
    \begin{itemize}
        \item $r_i = 1 - \sum_{c \in A_i \cap W} p_{ic}$ for each $i \in N$ (every voter pays at most 1 unit of money in total for approved winners and retains $r_i$ after those payments), and
        \item $\sum_{i: c \in A_i} p_{ic} \leq q$ for each $c \in W$ (every winning candidate receives a payment that is at most the quota).
    \end{itemize}
\end{definition}
For a given committee $W$, we denote by $\mathcal P(W)$ the set of all payment systems $(p, r)$ for $W$. Note that $\mathcal P(W)$ is non-empty (take $r_i = 1$ for all $i \in N$ and set all payments to 0).

The definition of a payment system we use is somewhat non-standard in that it imposes only an upper bound on how much money is paid to winners (rather than imposing equality). Thus, the existence of a payment system on its own does not give any fairness properties. Thus, we will be interested in payment systems with additional properties; in particular, ones for which supporters of losing candidates have small reserves.
The next theorem shows that these additional requirements exactly characterize core+.

\begin{theorem}
    \label{thm:core+-equivalence}
    A commitee $W$ satisfies core+ if and only if there exists a payment system $(p, r) \in \mathcal P(W)$ for $W$ such that $p_{iw} \le r_i$ for all $i \in N$ and $w \in W \cap A_i$, and for every losing candidate $c \in C \setminus W$, we have $\sum_{i: c \in A_i} r_i < q$.
\end{theorem}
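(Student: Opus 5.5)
The plan is to prove the two directions separately. The direction from the certificate to core+ is a direct weak-duality computation. The converse comes from LP duality (Farkas) applied to the homogeneous system underlying \eqref{eq:program-core+-fail}.

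\emph{Certificate $\Rightarrow$ core+.} Fix $(p,r)$ with the stated properties and suppose $(x,y,z)$ solves \eqref{eq:program-core+-fail}. Give voter $i$ the personal price $\pi_{ic}=p_{ic}$ for $c\in W\cap A_i$ and $\pi_{ic}=r_i$ for $c\in A_i\setminus W$. I claim $\sum_{c}x_{ic}\pi_{ic}\ge y_i$. The voter must place mass $(u_i(W)+1)y_i$ in total. At most $y_i$ of it can go on each of the $u_i(W)$ approved winners. Winners are the cheaper option because $p_{iw}\le r_i$. So the cheapest placement puts $y_i$ on every approved winner and $y_i$ on non-winners, at cost $y_i(\sum_w p_{iw}+r_i)=y_i$.

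Summing over $i$ and using $x_{ic}\le z_c$ gives
\[
\sum_i y_i\le\sum_c z_c\sum_{i:c\in A_i}\pi_{ic}\le q\sum_c z_c .
\]
The last inequality is strict unless $z_c=0$ for every loser $c$, because loser prices sum to less than $q$. Together with $\sum_i y_i\ge q\sum_c z_c$, this forces $z_c=0$ on all losers. But then any voter with $y_i>0$ needs non-winner mass at least $y_i>0$, which is impossible since $x_{ic}\le z_c=0$. Hence $\sum_i y_i=0$, a contradiction.

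\emph{Core+ $\Rightarrow$ certificate.} First, homogenize. The constraints $\mathsf{AC}$ and $\sum_i y_i\ge q\sum_c z_c$ are homogeneous, so the bounds $\le 1$ can be dropped by rescaling. Let $K$ be the cone of non-negative $(x,y,z)$ satisfying $\mathsf{AC}$. On $K$, $y=0$ forces $x=0$. Therefore core+ says that $\sum_i y_i-q\sum_c z_c<0$ on the compact slice $K\cap\{\sum y+\sum z=1\}$. By compactness there is $\delta>0$ with $\sum_i y_i\le q'\sum_c z_c$ on all of $K$, where $q'=q-\delta$.

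So the LP $\max \sum_i y_i-q'\sum_c z_c$ over $K$ has optimum $0$, and its dual is feasible. The dual has free multipliers $\alpha_i$ for the equalities, $\beta_{ic}\ge0$ for $x_{ic}\le z_c$, and $\gamma_{iw}\ge0$ for $x_{iw}\le y_i$. Up to sign bookkeeping, which I would verify carefully, the dual constraints read:
\[
\beta_{ic}\ge\alpha_i-\gamma_{ic}\ (\gamma_{ic}:=0\text{ for }c\notin W),\qquad
\sum_{i:c\in A_i}\beta_{ic}\le q',\qquad
(u_i(W)+1)\alpha_i-\sum_{w\in W\cap A_i}\gamma_{iw}\ge1 .
\]
The last constraint together with $\gamma\ge0$ forces $\alpha_i>0$.

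Now build the certificate. Set $\tilde r_i=\alpha_i$ and $\tilde p_{iw}=(\alpha_i-\gamma_{iw})_+$. Then $0\le\tilde p_{iw}\le\tilde r_i$, and $\tilde p_{iw}\le\beta_{iw}$ and $\tilde r_i\le\beta_{ic}$ for every loser $c\in A_i$. Moreover the voter total $s_i:=\sum_w\tilde p_{iw}+\tilde r_i$ is at least $1$. Divide voter $i$'s entries by $s_i$. This keeps $p_{iw}\le r_i$, makes the voter's budget exactly $1$, and only decreases per-candidate sums. Hence $\sum_i p_{iw}\le q'<q$ for winners, so $(p,r)\in\mathcal P(W)$, and $\sum_i r_i\le q'<q$ for losers, as required.

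\emph{Main obstacle.} The delicate step is obtaining the strict inequality for losers. A naive Farkas application to \eqref{eq:program-core+-fail} yields only weak inequalities. I handle this by the compactness argument, which replaces $q$ with some $q'<q$ before dualizing. The remaining risk is the sign bookkeeping in the dual, which I would check directly.
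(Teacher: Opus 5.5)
Your proposal is correct and follows essentially the same route as the paper. The certificate-to-core+ direction is the same pricing argument (winners priced at $p_{iw}$, losers at $r_i$, so each voter's cost is at least $y_i$), and the converse uses the same Farkas/LP-duality multipliers, an analogous capping of winner payments at the reserve, and the same per-voter normalization by $s_i\ge 1$. The only difference is how you obtain the strict inequality for losers: you first extract a slack quota $q'<q$ by compactness of the slice $K\cap\{\sum_i y_i+\sum_c z_c=1\}$ and then dualize, whereas the paper normalizes $\sum_i y_i=1$, applies Farkas directly, and gets strictness from the multiplier $\delta>0$ (after scaling $\alpha+\delta=1$, the effective quota is $\alpha q<q$, which plays the role of your $q'$).
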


\begin{proof}
    Note that, apart from the upper bounds on the variables, the system \eqref{eq:program-core+-fail} is homogeneous (all its right-hand sides are 0). Hence, given any non-negative solution that violates the variable upper bounds, we can scale it down until it satisfies them. Hence, instead of restricting the variables to $[0,1]$, we can let them range over all of $\R_{\ge 0}$ without changing feasibility. Afterward, we can normalize a solution such that $\sum_{i\in N} y_i = 1$. Hence \eqref{eq:program-core+-fail} has a solution if and only if the following system has a solution:
    \begin{equation}
        \label{eq:program-core+-fail-normalized}
        x_{ic},y_i,z_c\ge 0;\quad \mathsf{AC}(x,y,z,W);\quad \sum_{i \in N} y_i\ge q\sum_{c \in C} z_c;\quad \sum_{i\in N} y_i = 1.
    \end{equation}
    The advantage of \eqref{eq:program-core+-fail-normalized} is that it does not contain strict inequalities. Thus, we can apply Farkas lemma to it. Introduce a multiplier $r_i \in \R$ for voter $i$'s equality inside $\mathsf{AC}(x,y,z,W)$; multipliers $p_{ic} \ge 0$ for the constraint $x_{ic} \le z_c$; multipliers $b_{iw} \ge 0$ for $x_{iw} \le y_i$; multiplier $\alpha \ge 0$ for the quota inequality; and multiplier $\delta \in \R$ for $\sum_i y_i = 1$. By the Farkas lemma, if \eqref{eq:program-core+-fail-normalized} has no solution then there exist such multipliers satisfying $\delta > 0$ and
    \begin{itemize}
        \item for all $c \in W$ and $i \in N$ such that $c \in A_i$, $p_{ic} + b_{ic} \ge r_i$, while for all $c \in C \setminus W$ and $i \in N$ such that $c \in A_i$, $p_{ic} \ge r_i$, and
        \item for all $c \in C$, $\sum_{i : c \in A_i} p_{ic}  \le \alpha q$, and
        \item for all $i \in N$, $(u_i(W) + 1) r_i - \sum_{w \in A_i \cap W} b_{iw} \ge \alpha + \delta$.
    \end{itemize}
    Note that this system of inequalities is homogeneous, so we can rescale a solution; let us normalize such that $\alpha + \delta = 1$, so $\alpha < 1$. 
    Note that from the last bullet point, $(u_i(W) + 1) r_i \ge 1 + \sum_{w \in A_i \cap W} b_{iw} \ge 1$, so $r_i > 0$.
    Next, we can further adjust the solution to be nicer while still satisfying the constraints; in particular, for winners $w \in W$ approved by $i \in N$, we can reduce payments such that $p_{iw} \gets \min \{p_{iw}, r_i\}$ and set $b_{iw} \gets r_i - p_{iw}$. For losers $c \in C \setminus W$ approved by $i \in N$, we can set $p_{ic} \gets r_i$. After this operation, the inequality of the third bullet point reads as $s_i := r_i + \sum_{w\in A_i\cap W} p_{iw} \ge 1$. For each voter, divide all their multipliers ($r_i,p_{ic},b_{iw}$) by $s_i$ to normalize. Now, we have found values of $r$ and $p$ that satisfy:
    \begin{itemize}
        \item $r_i = 1 - \sum_{w\in A_i\cap W} p_{iw}$,
        \item for all $c \in W$, $p_{iw} \le r_i$,
        \item for all losing candidates $c \in C \setminus W$, $\sum_{i : c \in A_i} r_i = \sum_{i : c \in A_i} p_{ic} \le \alpha q < q$ since $\alpha < 1$.
    \end{itemize}
    Hence, $(p,r)$ gives a payment system with the required properties. 

    Conversely, suppose we are given a payment system with the required properties. We show that $W$ satisfies core+. 
    Suppose not, and there exist a solution $(x,y,z)$ to system \eqref{eq:program-core+-fail} defining a failure of core+.
    We compute the price of the deviation according to the payment system, taking $r_i$ to be the price of approved losers. Then for voter $i$, the cost of the assignment to voter $i$ is
    \begin{align*}
    \sum_{w\in A_i\cap W}p_{iw}x_{iw}
    +\sum_{c\in A_i\setminus W}r_i x_{ic}
    &\quad=(u_i(W)+1)r_i y_i
    -\sum_{w\in A_i\cap W}(r_i-p_{iw})x_{iw}\\
    &\quad\ge (u_i(W)+1)r_i y_i
    -\sum_{w\in A_i\cap W}(r_i-p_{iw})y_i \tag{as $r_i - p_{iw}\ge 0$ and $x_{iw}\le y_i$}\\
    &\quad=\big(r_i+\textstyle\sum_{w\in A_i\cap W}p_{iw}\big)y_i
    =y_i.
    \end{align*}
    Next, consider some voter $i$ with $y_i > 0$. For this voter, 
    \[
    \sum_{c\in A_i\setminus W}x_{ic}
    =(u_i(W)+1)y_i-\sum_{w\in A_i\cap W}x_{iw}
    \ge y_i>0.
    \]
    Hence, $z_c > 0$ for at least one losing candidate $c \in C \setminus W$.

    Now, summing the preceding voter inequalities and using $x_{ic} \le z_c$, we get
    \begin{align*}
    \sum_i y_i
    &\le
    \sum_{w\in W}\sum_{i:w\in A_i}p_{iw}x_{iw}
    +\sum_{c\notin W}\sum_{i:c\in A_i}r_i x_{ic}\\
    &\le
    \sum_{w\in W}z_w\sum_{i:w\in A_i}p_{iw}
    +\sum_{c\notin W}z_c\sum_{i:c\in A_i}r_i\\
    &<q\sum_{c\in C}z_c.
    \end{align*}
    The last inequality is strict due to the presence of a losing candidate with $z_c > 0$. This contradicts the affordability constraint $\sum_i y_i\ge q\sum_c z_c$.
\end{proof}

Thus, to prove the existence of a core+ committee, it suffices to find a committee that admits a payment system with the properties given in \Cref{thm:core+-equivalence}. 

\begin{remark}[Relationship to other properties]
    Core+ (in its positive payment form) is equivalent to ``frugal Lindahl priceability'', recently introduced by \citet{golz2026apportionment}. As they show, core+ therefore implies Lindahl priceability \citep{munagala2022auditing}. Stable priceability \citet{peters2021market} (a property that is not always satisfiable) implies core+.
\end{remark}

\section{Harmonic entropy}

Our existence proof is based on a voting rule that optimizes a carefully designed objective function over the space of all committees $W$ and all payment systems in $\mathcal{P}(W)$. The objective function, which we call the \emph{harmonic entropy objective}, is similar to Shannon entropy and thereby rewards committees in which voters' payments are spread as uniformly as possible across approved winners and across as many such winners as possible. The proof will then show that in an optimum, the resulting payment system satisfies the conditions of \Cref{thm:core+-equivalence}, and hence the committee satisfies core+.

The proof uses a swapping argument, which requires us to connect payment systems as candidates are added and removed from the committee. We will construct our objective function with exactly this idea in mind.

We begin abstractly by considering an arbitrary probability distribution $x = (x_1, \dots, x_d)\in\Delta^d$. In our later instantiation, $x$ will correspond to a voter's payments and reserve. Our goal now is to define what we call the \emph{harmonic entropy} $F(x)$ of a probability distribution. The name is chosen because, under our definition, the harmonic entropy $F(\frac1d,\frac1d,\dots,\frac1d)$ of the uniform distribution over $d$ coordinates equals $H_{d-1} = 1 + \frac12 + \frac13 + \dots + \frac1{d-1}$, the ($d-1$)-th harmonic number. In contrast to the classic Shannon entropy $H$, we have $H(\frac1d,\frac1d,\dots,\frac1d) = \log_2 d$. Note that the uniform distribution maximizes $F$ in $\Delta^d$, and for non-uniform distributions $x$, we can think of $F(x)$ as a measure of the distance between $x$ and the uniform distribution (also see \Cref{fig:potential-simplex}).

\subsection{Water-filling operators}\label{sec:Water-filling operators}

Fix $x = (x_1, \dots, x_d)\in\Delta^d$. We first define some important auxiliary functions. We sort the coordinates of $x$ in decreasing order as
\[
    x_{(1)}\ge x_{(2)}\ge\cdots\ge x_{(d)}.
\]
For $t\in[d]$, let $S_t(x)\coloneqq\sum_{j=1}^t x_{(j)}$ denote the mass contained in the $t$ largest coordinates. Then, for every integer $\ell\ge0$, define
\[
    f_\ell(x) \coloneqq \max_{1\le t\le d}\frac{S_t(x)}{\ell+t},
\]

The quantities $f_\ell(x)$ admit an intuitive water-filling interpretation. Fix a water-level $\tau\ge 0$ and lower every coordinate $j \in [d]$ with $x_j \geq \tau$ down to $\tau$. The total mass released by this is denoted by 
\[
    E_x(\tau)\coloneqq \sum_{j=1}^d (x_j-\tau)_+.
\]
Creating $\ell$ new coordinates of size $\tau$ requires total mass $\ell \cdot \tau$. The following proposition shows that $f_\ell(x)$ is exactly the level at which these two quantities agree.

\begin{proposition}\label{prop:water_level}
    For every $\ell\ge 1$, $f_\ell(x)$ is the unique positive solution of $E_x(\tau)=\ell \cdot\tau$.
    Moreover, $f_0(x)=x_{(1)}=\max_{j \in [d]} x_j$.
\end{proposition}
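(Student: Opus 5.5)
The plan is to study the function $g(\tau) \coloneqq E_x(\tau) - \ell\tau$ on $[0,\infty)$ and compare its root with the maximum defining $f_\ell(x)$. The function $E_x$ is continuous, convex, piecewise linear and non-increasing. On each interval $[x_{(t+1)}, x_{(t)}]$, with the convention $x_{(d+1)} = 0$, it equals $S_t(x) - t\tau$. Hence $g$ is continuous and strictly decreasing for $\ell \ge 1$, since its slope is $-(\#\{j : x_j > \tau\}) - \ell \le -\ell < 0$. Moreover $g(0) = \sum_j x_j = 1 > 0$ and $g(\tau) = -\ell\tau < 0$ for $\tau \ge x_{(1)}$. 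By the intermediate value theorem $g$ has exactly one root $\tau^*$, and it lies in $(0, x_{(1)})$. This gives existence and uniqueness of the positive solution.

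Next I will show $\tau^* = f_\ell(x)$ by proving two inequalities.

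\emph{Lower bound.} For every $t \in [d]$ and every $\tau \ge 0$ we have
\[
E_x(\tau) \ge \sum_{j=1}^t (x_{(j)} - \tau) = S_t(x) - t\tau,
\]
because dropping the positive-part truncation, and dropping the remaining coordinates, can only decrease the sum. At $\tau^*$ this gives $\ell\tau^* = E_x(\tau^*) \ge S_t(x) - t\tau^*$, so $\tau^* \ge S_t(x)/(\ell+t)$ for all $t$. Therefore $\tau^* \ge f_\ell(x)$.

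\emph{Upper bound.} Let $t^* = \#\{j : x_j > \tau^*\}$. This is at least $1$, since $\tau^* < x_{(1)}$. Then $E_x(\tau^*) = S_{t^*}(x) - t^*\tau^*$ exactly. Solving $\ell\tau^* = S_{t^*}(x) - t^*\tau^*$ gives $\tau^* = S_{t^*}(x)/(\ell+t^*) \le f_\ell(x)$.

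The two bounds together give $\tau^* = f_\ell(x)$.

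For $\ell = 0$ the quantity $S_t(x)/t$ is the average of the $t$ largest coordinates. This average is non-increasing in $t$, because adding a coordinate no larger than all the previous ones cannot raise the average. So the maximum is attained at $t = 1$ and equals $x_{(1)}$.

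The only delicate point is the case where $\tau^*$ coincides with a coordinate value. This is harmless: counting only strictly larger coordinates in $t^*$ still gives the exact identity, since terms with $x_j = \tau^*$ contribute zero. Otherwise the argument is routine.
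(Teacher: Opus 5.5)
Your proof is correct and uses the same ingredients as the paper's. The paper relies on the identity $E_x(\tau)=\max_{0\le t\le d}\bigl(S_t(x)-t\tau\bigr)$, and its two halves are exactly your lower bound (valid for every $t$) and your upper bound (attained at $t^*=\#\{j: x_j>\tau\}$); both arguments then finish with strict monotonicity of $E_x(\tau)-\ell\tau$. The differences are organizational: you obtain the root first by the intermediate value theorem and then identify it with $f_\ell(x)$, whereas the paper checks directly that $f_\ell(x)$ solves the equation. Your averaging argument for $\ell=0$ also spells out a step the paper leaves implicit.
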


\begin{proof}
    For every $\tau\ge0$, $E_x(\tau) = \max_{0\le t\le d}\bigl(S_t(x)-t\cdot \tau\bigr)$.
    Hence, $E_x(\tau)\le \ell \cdot\tau$ if and only if $S_t(x)\le(\ell+t)\cdot\tau$ for every $t\in[d]$, which is equivalent to $\tau\ge f_\ell(x)$. At $\tau=f_\ell(x)$, equality holds for some maximizing $t$, so $E_x(f_\ell(x))=\ell f_\ell(x)$.
    Uniqueness follows because $E_x(\tau)-\ell \cdot \tau$ is strictly decreasing.
    The case $\ell=0$ follows directly from the definition of $f_0$.
\end{proof}

Thus, for $\ell\ge1$, lowering the large coordinates to $f_\ell(x)$ releases exactly enough mass to create $\ell$ new coordinates of that same size. In particular, truncating the existing coordinates at the level $f_1(x)$ releases exactly enough mass to fill that new coordinate up to $f_1(x)$. \Cref{fig:waterfill-operation} gives a visualization for $x=(1/2,2/5,1/10)$ with $f_1(x) = 3/10$.

\begin{figure}[ht!]
    \centering
    \begin{tikzpicture}[x=0.78cm,y=4.8cm,>=Stealth,font=\small]
    \definecolor{waterkeep}{RGB}{58,119,151}
    \definecolor{watermove}{RGB}{222,143,52}

\begin{scope}
        \node[anchor=south] at (1.25,.55) {$x=(\frac12,\frac25,\frac1{10})$};

        \foreach \x/\height in {0/.5,1/.4,2/.1} {
        \fill[waterkeep!75] (\x,0) rectangle (\x+.55,\height);
        \draw[waterkeep!80!black] (\x,0) rectangle (\x+.55,\height);
        }

        \fill[watermove!80] (0,.3) rectangle (.55,.5);
        \fill[watermove!80] (1,.3) rectangle (1.55,.4);
        \node at (.275,.4) {$\frac15$};
        \node at (1.275,.35) {$\frac1{10}$};

        \draw[dashed,thick] (-.15,.3)--(2.75,.3);
        \node[fill=white,inner sep=1pt] at (-1.2,0.3){$f_1(x)=\frac3{10}$};

        \draw (-.15,0)--(2.75,0);
        \foreach \x/\label in {0/1,1/2,2/3}
        \node[anchor=north] at (\x+.275,-.015) {$\label$};
    \end{scope}

    \draw[->,thick] (3.2,.23)--(4.7,.23) node[midway,above] {$\Phi$};

    \begin{scope}[xshift=4.4cm]
        \node[anchor=south] at (1.7,.55){$\Phi(x)=(\frac3{10},\frac3{10},\frac1{10},\frac3{10})$};

        \foreach \x/\height in {0/.3,1/.3,2/.1} {
        \fill[waterkeep!75] (\x,0) rectangle (\x+.55,\height);
        \draw[waterkeep!80!black] (\x,0) rectangle (\x+.55,\height);
        }

        \fill[watermove!80] (3,0) rectangle (3.55,.3);
        \draw[watermove!80!black] (3,0) rectangle (3.55,.3);
        \draw[watermove!80!black] (3,.2)--(3.55,.2);
        \node at (3.275,.1) {$\frac15$};
        \node at (3.275,.25) {$\frac1{10}$};

        \draw[dashed,thick] (-.15,.3)--(3.75,.3);
        \draw (-.15,0)--(3.75,0);
        \foreach \x/\label in {0/1,1/2,2/3}
        \node[anchor=north] at (\x+.275,-.015) {$\label$};
        \node[anchor=north] at (3.275,-.015) {$4$};
    \end{scope}
    \end{tikzpicture}
    \caption{Visualization of the operation $\Phi$ with $x = (1/2, 2/5, 1/10)$.}
    \label{fig:waterfill-operation}
\end{figure}

We formalize the above-described one-swap water-filling behavior via the shift operator $\Phi$. Given a distribution $x \in \Delta^d$, the shift operater is defined as
\[
    \Phi(x)=\bigl(\min\{x_1,f_1(x)\},\ldots,\min\{x_d,f_1(x)\},f_1(x)\bigr).
\]
The significance of $\Phi$ is that it interacts particularly simply with the family $(f_\ell(x))_{\ell\ge0}$. After one new coordinate has been created via $\Phi$, the level required to create $\ell$ further coordinates is exactly the level that was previously required to create $\ell+1$ coordinates. Thus, $\Phi$ shifts the sequence $(f_\ell(x))_{\ell\ge0}$ by one position.

Consider the example from \Cref{fig:waterfill-operation}, applying $\Phi$ creates a new coordinate at level $f_1(x)=3/10$, giving $\Phi(x)=(\frac3{10},\frac3{10},\frac1{10},\frac3{10})$. The water-filling levels of $\Phi(x)$ are therefore shifted by one position:
\[
    f_0(\Phi(x))=\frac3{10}=f_1(x),\quad
    f_1(\Phi(x))=\frac9{40}=f_2(x),\quad
    f_2(\Phi(x))=\frac9{50}=f_3(x), \quad\ldots
\]

With this intuition in mind, we formalize the behavior in the following lemma.

\begin{lemma}\label{lem:shift}
    Let $x=(x_1,\ldots,x_d)\in\Delta^d$. Then $\Phi(x)\in\Delta^{d+1}$, $\Phi(x)_j\le x_j$ for every $j \in [d]$, and
    \[
        f_\ell(\Phi(x))=f_{\ell+1}(x)\quad \text{for all } \ell\ge0,
    \]
\end{lemma}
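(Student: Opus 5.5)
The first two claims follow directly from the definition of $\Phi$ and \Cref{prop:water_level}; the shift identity is then obtained by comparing the equations characterizing the two water levels. Write $\tau_1 = f_1(x)$, which is positive by \Cref{prop:water_level} since $x\in\Delta^d$ has positive mass.

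\textbf{Membership and monotonicity.} The inequality $\Phi(x)_j = \min\{x_j,\tau_1\}\le x_j$ for $j\in[d]$ is immediate, and every coordinate of $\Phi(x)$ is nonnegative. For the total mass we compute
\[
  \sum_{j=1}^d \min\{x_j,\tau_1\} + \tau_1 = \sum_{j=1}^d x_j - E_x(\tau_1) + \tau_1 = 1,
\]
using $E_x(\tau_1)=1\cdot\tau_1$ from \Cref{prop:water_level} with $\ell=1$. Hence $\Phi(x)\in\Delta^{d+1}$.

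\textbf{The case $\ell=0$.} Every coordinate of $\Phi(x)$ is at most $\tau_1$, and the last coordinate equals $\tau_1$. Therefore $f_0(\Phi(x))=\max_j\Phi(x)_j=\tau_1=f_1(x)$.

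\textbf{The case $\ell\ge1$.} Let $\sigma=f_{\ell+1}(x)$. By \Cref{prop:water_level}, $f_\ell(\Phi(x))$ is the unique positive root of $E_{\Phi(x)}(\tau)=\ell\tau$, so it suffices to show $E_{\Phi(x)}(\sigma)=\ell\sigma$. The key step is the inequality $\sigma\le\tau_1$. This holds because $\tau\mapsto E_x(\tau)-\ell\tau$ is strictly decreasing in $\ell$ at each fixed $\tau>0$, and each $f_\ell(x)$ is the zero of that function. Equivalently, from the defining formula, $S_t/(\ell+1+t)\le S_t/(1+t)$ for every $t$, which gives $f_{\ell+1}(x)\le f_1(x)$ directly.

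For $\tau\le\tau_1$ the truncation commutes with the excess: $(\min\{x_j,\tau_1\}-\tau)_+ = (x_j-\tau)_+ - (x_j-\tau_1)_+$. Summing over $j$ and adding the new coordinate gives
\[
  E_{\Phi(x)}(\tau) = E_x(\tau) - E_x(\tau_1) + (\tau_1-\tau) = E_x(\tau)-\tau,
\]
again using $E_x(\tau_1)=\tau_1$. Evaluating at $\tau=\sigma\le\tau_1$ yields
\[
  E_{\Phi(x)}(\sigma)=E_x(\sigma)-\sigma=(\ell+1)\sigma-\sigma=\ell\sigma,
\]
and $\sigma>0$. Uniqueness in \Cref{prop:water_level} then gives $f_\ell(\Phi(x))=\sigma=f_{\ell+1}(x)$.

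The only delicate step is establishing $\sigma\le\tau_1$, since the identity for $E_{\Phi(x)}$ holds only at levels below the truncation level $\tau_1$; the explicit max formula handles it in one line.
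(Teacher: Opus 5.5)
Your proof is correct and follows the paper's own argument essentially step by step. Both derive the mass identity from $E_x(f_1(x))=f_1(x)$, establish $E_{\Phi(x)}(\tau)=E_x(\tau)-\tau$ for $\tau\le f_1(x)$, evaluate it at $f_{\ell+1}(x)$, and conclude by the uniqueness in \Cref{prop:water_level}; your one-line derivation of $f_{\ell+1}(x)\le f_1(x)$ from the max formula additionally justifies a step the paper only asserts.
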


\begin{proof}
    Write $\bar{f}\coloneqq f_1(x)$. By \Cref{prop:water_level}, $E_x(\bar{f})=\sum_{j=1}^d (x_j-\bar{f})_+=\bar{f}$. For every $j\in[d]$, we have $x_j=\min(x_j,\bar{f})+(x_j-\bar{f})_+$. Summing over $j\in[d]$ gives
    \[   
        \sum_{j=1}^d \min(x_j,\bar{f})+\bar{f} = 1-E_x(\bar{f})+\bar{f} = 1.
    \]
    Hence, $\Phi(x)\in\Delta^{d+1}$. Moreover, $\Phi(x)_j=\min(x_j,\bar{f})\le x_j$ for every $j\in[d]$.

    We next prove the shift identity. Since $\Phi(x)_j\le \bar{f}$ for every $j\in[d]$ and $\Phi(x)_{d+1}=\bar{f}$, the largest coordinate of $\Phi(x)$ is $\bar{f}$. Hence, $f_0(\Phi(x))=\bar{f}=f_1(x)$.

    Now fix $\ell\ge1$. We first compare the excess-mass functions $E$ evaluated for $x$ and $\Phi(x)$. Let $0\le \tau\le \bar{f}$. First note that, for every $j\in[d]$,
    \[
        \bigl(\min\{x_j,\bar{f}\}-\tau\bigr)_+=(x_j-\tau)_+-(x_j-\bar{f})_+.
    \]
    Indeed, if $x_j\le \bar{f}$, both sides equal $(x_j-\tau)_+$, while if $x_j>\bar{f}$, both sides equal $\bar{f}-\tau$.
    Summing over $j\in[d]$, gives
    \[
        \sum_{j=1}^d(\min\{x_j, \bar{f}\}-\tau)_+ = E_x(\tau) - E_x(\bar{f}).
    \]
    Hence, the first $d$ coordinates of $\Phi(x)$ contribute $E_x(\tau)-E_x(\bar{f})$ to $E_{\Phi(x)}(\tau)$. The $(d+1)$-st coordinate equals $\bar{f}$, and hence, contributes $\bar{f}-\tau$.
    All in all, we get 
    \[
        E_{\Phi(x)}(\tau) = E_x(\tau)- E_x(\bar{f}) + (\bar{f} - \tau).
    \]
    Since $E_x(\bar{f})=\bar{f}$ by \Cref{prop:water_level}, we obtain $E_{\Phi(x)}(\tau)=E_x(\tau)-\tau$.
    We now evaluate this identity at $\tau=f_{\ell+1}(x)$. Note that since $f_{\ell+1}(x)\le f_1(x)=\bar{f}$, this value lies in the range considered above. By \Cref{prop:water_level}, $E_x(f_{\ell+1}(x))=(\ell+1)f_{\ell+1}(x)$, and therefore,
    \[
        E_{\Phi(x)}\bigl(f_{\ell+1}(x)\bigr) = E_x(f_{\ell+1}(x)) - f_{\ell+1}(x)
        = (\ell+1)\,f_{\ell+1}(x) - f_{\ell+1}(x)
        = \ell\,f_{\ell+1}(x).
    \]
    By \Cref{prop:water_level}, $f_\ell(\Phi(x))$ is the unique positive water-level $\tau$ satisfying $E_{\Phi(x)}(\tau)=\ell \tau$. Hence, $f_\ell(\Phi(x))=f_{\ell+1}(x)$. Together with the case $\ell=0$, this proves the shift identity for every $\ell\ge0$.
\end{proof}

\subsection{Defining harmonic entropy}\label{section:harmonic_entropy}

We are now ready to define harmonic entropy.
\begin{definition}
    The \emph{harmonic entropy} of a probability distribution $x \in \Delta^d$ is
    \[
        F(x) \coloneqq \sum_{\ell=0}^{\infty} \left( \frac{1}{\ell+1}-f_\ell(x) \right).
    \]
\end{definition}

\begin{wrapstuff}[
    r,
    width=0.4\textwidth,
    hsep=1.2em,
    vsep=0.4em,
    type=figure
]
\centering

\begin{tikzpicture}[
    x=5cm, y=5cm, font=\footnotesize,
    line join=round, line cap=round
]
\definecolor{simplexEdge}{HTML}{24313B}
    \definecolor{simplexBand0}{HTML}{481F70}
    \definecolor{simplexBand1}{HTML}{3B528B}
    \definecolor{simplexBand2}{HTML}{287C8E}
    \definecolor{simplexBand3}{HTML}{20A486}
    \definecolor{simplexBand4}{HTML}{5EC962}
    \definecolor{simplexBand5}{HTML}{B2DD2D}
    \definecolor{simplexBand6}{HTML}{EAE51A}
    \coordinate (e1) at (0,0);
    \coordinate (e2) at (1,0);
    \coordinate (e3) at (0.5,0.866025);
    \coordinate (u) at (0.5,0.288675);
    \fill[simplexBand0] (e1) -- (e2) -- (e3) -- cycle;
\path[fill=simplexBand1, draw=white, draw opacity=0.65, line width=0.35pt]
        (0.52827,0.81705) -- (0.52257,0.82043) -- (0.51601,0.82290) -- (0.50838,0.82457) --
        (0.50086,0.82517) -- (0.49162,0.82457) -- (0.48399,0.82290) -- (0.47743,0.82043) --
        (0.47173,0.81705) -- (0.02827,0.04897) -- (0.02806,0.04426) -- (0.02909,0.03657) --
        (0.03084,0.03033) -- (0.03393,0.02309) -- (0.03730,0.01732) -- (0.04201,0.01118) --
        (0.04749,0.00577) -- (0.05273,0.00192) -- (0.05655,0.00000) -- (0.94345,0.00000) --
        (0.94923,0.00325) -- (0.95465,0.00770) -- (0.95991,0.01347) -- (0.96419,0.01969) --
        (0.96829,0.02799) -- (0.97066,0.03543) -- (0.97180,0.04234) -- (0.97173,0.04897) --
        (0.52827,0.81705) -- cycle;
\path[fill=simplexBand2, draw=white, draw opacity=0.65, line width=0.35pt]
        (0.07299,0.12643) -- (0.07169,0.12032) -- (0.07063,0.10695) -- (0.07105,0.09612) --
        (0.07226,0.08666) -- (0.07432,0.07698) -- (0.07705,0.06802) -- (0.08070,0.05894) --
        (0.08595,0.04879) -- (0.09139,0.04041) -- (0.09743,0.03272) -- (0.10383,0.02587) --
        (0.11118,0.01925) -- (0.11876,0.01347) -- (0.12794,0.00770) -- (0.14004,0.00192) --
        (0.14599,0.00000) -- (0.85401,0.00000) -- (0.85996,0.00192) -- (0.86842,0.00577) --
        (0.88273,0.01452) -- (0.89617,0.02587) -- (0.90815,0.03977) -- (0.91685,0.05389) --
        (0.92339,0.06928) -- (0.92741,0.08468) -- (0.92932,0.10318) -- (0.92888,0.11548) --
        (0.92701,0.12643) -- (0.57299,0.73960) -- (0.56835,0.74378) -- (0.55801,0.75095) --
        (0.54179,0.75900) -- (0.53069,0.76284) -- (0.52170,0.76504) -- (0.50292,0.76716) --
        (0.49276,0.76695) -- (0.48233,0.76576) -- (0.47196,0.76357) -- (0.46296,0.76082) --
        (0.45228,0.75644) -- (0.44270,0.75138) -- (0.43165,0.74378) -- (0.42701,0.73960) --
        (0.07299,0.12643) -- cycle;
\path[fill=simplexBand3, draw=white, draw opacity=0.65, line width=0.35pt]
        (0.13889,0.24056) -- (0.13347,0.22348) -- (0.13060,0.21080) -- (0.12784,0.18679) --
        (0.12699,0.16991) -- (0.12724,0.16265) -- (0.13010,0.13664) -- (0.13405,0.12056) --
        (0.14085,0.10154) -- (0.14828,0.08660) -- (0.15740,0.07247) -- (0.16954,0.05774) --
        (0.18338,0.04435) -- (0.20448,0.02887) -- (0.21064,0.02502) -- (0.22568,0.01732) --
        (0.24786,0.00770) -- (0.26027,0.00385) -- (0.27778,0.00000) -- (0.72222,0.00000) --
        (0.73200,0.00192) -- (0.74762,0.00606) -- (0.75700,0.00962) -- (0.77432,0.01732) --
        (0.78985,0.02529) -- (0.80114,0.03272) -- (0.81412,0.04234) -- (0.82857,0.05581) --
        (0.83394,0.06158) -- (0.84383,0.07420) -- (0.85086,0.08511) -- (0.85766,0.09815) --
        (0.86523,0.11796) -- (0.86932,0.13397) -- (0.87082,0.14292) -- (0.87303,0.16743) --
        (0.87134,0.19590) -- (0.86856,0.21554) -- (0.86111,0.24056) -- (0.63889,0.62546) --
        (0.62680,0.63870) -- (0.61726,0.64752) -- (0.60154,0.65935) -- (0.58151,0.67235) --
        (0.56807,0.67884) -- (0.55328,0.68504) -- (0.53224,0.69093) -- (0.51751,0.69328) --
        (0.49914,0.69431) -- (0.48617,0.69368) -- (0.47522,0.69223) -- (0.46776,0.69093) --
        (0.45202,0.68670) -- (0.43492,0.68017) -- (0.41682,0.67138) -- (0.39080,0.65380) --
        (0.37834,0.64375) -- (0.36111,0.62546) -- (0.13889,0.24056) -- cycle;
\path[fill=simplexBand4, draw=white, draw opacity=0.65, line width=0.35pt]
        (0.25000,0.43301) -- (0.23638,0.39788) -- (0.22486,0.36252) -- (0.21234,0.31775) --
        (0.20279,0.27713) -- (0.20182,0.25718) -- (0.20182,0.23409) -- (0.20425,0.19630) --
        (0.21057,0.17227) -- (0.21824,0.15011) -- (0.22314,0.14015) -- (0.23178,0.12509) --
        (0.24039,0.11230) -- (0.25447,0.09623) -- (0.27212,0.07873) -- (0.29742,0.06158) --
        (0.32014,0.04811) -- (0.34139,0.03706) -- (0.37457,0.02694) -- (0.42638,0.01347) --
        (0.47337,0.00385) -- (0.50000,0.00000) -- (0.52663,0.00385) -- (0.54707,0.00770) --
        (0.59717,0.01925) -- (0.65104,0.03464) -- (0.66040,0.03780) -- (0.68326,0.05004) --
        (0.70858,0.06543) -- (0.72807,0.07890) -- (0.74553,0.09623) -- (0.75961,0.11230) --
        (0.76706,0.12317) -- (0.77578,0.13818) -- (0.78176,0.15011) -- (0.78462,0.15751) --
        (0.78943,0.17227) -- (0.79575,0.19630) -- (0.79818,0.23409) -- (0.79826,0.25319) --
        (0.79746,0.27520) -- (0.78938,0.31091) -- (0.77979,0.34677) -- (0.76687,0.38840) --
        (0.75576,0.41918) -- (0.75000,0.43301) -- (0.74091,0.44492) -- (0.71349,0.47700) --
        (0.68475,0.50754) -- (0.66395,0.52817) -- (0.63861,0.55184) -- (0.60837,0.57055) --
        (0.58590,0.58252) -- (0.56788,0.59099) -- (0.54390,0.59753) -- (0.52088,0.60197) --
        (0.50756,0.60275) -- (0.49020,0.60271) -- (0.47912,0.60197) -- (0.45610,0.59753) --
        (0.43212,0.59099) -- (0.39818,0.57420) -- (0.37818,0.56266) -- (0.36139,0.55184) --
        (0.33099,0.52325) -- (0.29848,0.49003) -- (0.27362,0.46237) -- (0.25000,0.43301) -- cycle;
\path[fill=simplexBand5, draw=white, draw opacity=0.65, line width=0.35pt]
        (0.51556,0.48113) -- (0.47778,0.48113) -- (0.44889,0.47728) -- (0.44028,0.47391) --
        (0.40621,0.45723) -- (0.37994,0.44254) -- (0.34145,0.41820) -- (0.31523,0.39974) --
        (0.31257,0.39778) -- (0.31180,0.39645) -- (0.30855,0.36122) -- (0.30734,0.33988) --
        (0.30672,0.31571) -- (0.30713,0.28562) -- (0.30972,0.24778) -- (0.31111,0.23864) --
        (0.32222,0.21170) -- (0.34444,0.17321) -- (0.36222,0.15011) -- (0.36944,0.14434) --
        (0.39796,0.12509) -- (0.42678,0.10777) -- (0.44802,0.09623) -- (0.46710,0.08660) --
        (0.49923,0.07180) -- (0.50077,0.07180) -- (0.50380,0.07313) -- (0.53290,0.08660) --
        (0.57322,0.10777) -- (0.60204,0.12509) -- (0.63056,0.14434) -- (0.63778,0.15011) --
        (0.65556,0.17321) -- (0.67222,0.20207) -- (0.67778,0.21170) -- (0.68889,0.23864) --
        (0.69028,0.24778) -- (0.69287,0.28562) -- (0.69328,0.31571) -- (0.69266,0.33988) --
        (0.68999,0.37915) -- (0.68820,0.39645) -- (0.68743,0.39778) -- (0.65855,0.41820) --
        (0.62006,0.44254) -- (0.59379,0.45723) -- (0.55972,0.47391) -- (0.55111,0.47728) --
        (0.52222,0.48113) -- (0.51556,0.48113) -- cycle;
\path[fill=simplexBand6, draw=white, draw opacity=0.65, line width=0.35pt]
        (0.46042,0.19702) -- (0.48333,0.17321) -- (0.49930,0.16044) -- (0.50070,0.16044) --
        (0.51667,0.17321) -- (0.55741,0.21554) -- (0.59204,0.27552) -- (0.60833,0.33198) --
        (0.61140,0.35218) -- (0.61070,0.35340) -- (0.59167,0.36084) -- (0.53222,0.37528) --
        (0.46537,0.37496) -- (0.40833,0.36084) -- (0.38930,0.35340) -- (0.38860,0.35218) --
        (0.39167,0.33198) -- (0.40889,0.27328) -- (0.44111,0.21747) -- (0.46042,0.19702) -- cycle;
    \draw[simplexEdge, line width=0.65pt] (e1) -- (e2) -- (e3) -- cycle;

\node[fill=white, fill opacity=0.94, text opacity=1,
          rounded corners=2pt, inner sep=2.5pt, text=simplexEdge]
          at (0.5,0.405) {$F(u)=\frac32$};
    \draw[simplexEdge!70, line width=0.4pt] (0.5,0.357) -- (u);
    \filldraw[fill=white, draw=simplexEdge, line width=0.6pt] (u) circle (1.5pt);

    \node[below=3pt] at (e1) {$(1,0,0)$};
    \node[below=3pt] at (e2) {$(0,1,0)$};
    \node[above=3pt] at (e3) {$(0,0,1)$};

\begin{scope}[shift={(0.14,-0.19)}, x=2.4cm, y=1cm]
        \fill[simplexBand0] (0.00000,0) rectangle (0.25000,0.10);
        \fill[simplexBand1] (0.25000,0) rectangle (0.50000,0.10);
        \fill[simplexBand2] (0.50000,0) rectangle (0.75000,0.10);
        \fill[simplexBand3] (0.75000,0) rectangle (1.00000,0.10);
        \fill[simplexBand4] (1.00000,0) rectangle (1.25000,0.10);
        \fill[simplexBand5] (1.25000,0) rectangle (1.40000,0.10);
        \fill[simplexBand6] (1.40000,0) rectangle (1.50000,0.10);
        \node[below=2pt, font=\scriptsize] at (0,0) {$0$};
        \node[below=2pt, font=\scriptsize] at (0.5,0) {$0.5$};
        \node[below=2pt, font=\scriptsize] at (1,0) {$1$};
        \node[below=2pt, font=\scriptsize] at (1.5,0) {$1.5$};
        \node[left=4pt] at (0,0.05) {$F$};
    \end{scope}
\end{tikzpicture}
\caption{Level sets of $F$ on the three-coordinate simplex, with maximum at $u=(\frac13,\frac13,\frac13)$.}
\label{fig:potential-simplex}
\end{wrapstuff}

The harmonic entropy captures how evenly the unit mass of $x$ is distributed across its coordinates. If all mass is concentrated on a single coordinate, then $f_\ell(x)=1/(\ell+1)$ for every $\ell$, so $F(x)=0$. In contrast, $F$ is maximized at the uniform vector, where it takes the value $H_{d-1} \coloneqq \sum_{j=1}^{d-1}\frac{1}{j}$, with $H_0\coloneqq 0$.
At the uniform vector $u=(1/d,\ldots,1/d)$, we have $S_t(u)=t/d$, and hence, $f_\ell(u)=1/(\ell+d)$. Consequently, $F(u) = \sum_{\ell=0}^{\infty}(\frac{1}{\ell+1}-\frac{1}{\ell+d}) = H_{d-1}$.
Thus, more concentrated vectors have lower potential, while more evenly spread vectors have higher potential. \Cref{fig:potential-simplex} illustrates this geometry for $d=3$.

The next lemma collects the basic properties of the potential that formalize this intuition. In particular, concavity will allow us to interpolate between payment systems, while invariance under appending a zero coordinate ensures that the potential remains unchanged when a new candidate is introduced with an initial payment of zero.

\begin{lemma}\label{lem:potential_properties}
    $F$ is continuous, concave, and symmetric (invariant under permuting coordinates of $x$). In addition, $F((x, 0)) = F(x)$ for all $x$.
\end{lemma}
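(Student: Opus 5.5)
The plan is to prove the four properties separately, working throughout with the explicit formula $f_\ell(x)=\max_{1\le t\le d} S_t(x)/(\ell+t)$.

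\emph{Symmetry.} Each $S_t(x)$ depends only on the multiset of coordinates, so every $f_\ell$ is symmetric, and hence so is $F$.

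\emph{Appending a zero.} For $(x,0)\in\Delta^{d+1}$ the sorted coordinates are those of $x$ followed by $0$. Thus $S_t((x,0))=S_t(x)$ for $t\le d$, and $S_{d+1}((x,0))=1=S_d(x)$. The extra index $t=d+1$ gives the term $1/(\ell+d+1)<1/(\ell+d)$, which never attains the maximum. Hence $f_\ell((x,0))=f_\ell(x)$ for all $\ell$, and so $F((x,0))=F(x)$.

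\emph{Concavity.} I would show that each $f_\ell$ is convex on $\Delta^d$. The function $S_t(x)$ is the maximum of $\sum_{j\in J}x_j$ over $|J|=t$, a maximum of linear functions, so it is convex. Then $f_\ell$ is a maximum of the convex functions $S_t/(\ell+t)$, hence convex. So each summand $\frac1{\ell+1}-f_\ell$ is concave. It remains to show that the series converges to a finite concave function; a pointwise limit of concave partial sums is then concave.

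\emph{Convergence.} This is the step I expect to be the main obstacle, since $F$ is an infinite series and we need uniform control on its tail. On $\Delta^d$ we have $S_d=1$ and $S_t\le 1$, so for every $t\le d$
\[
\frac{1}{\ell+d}\le f_\ell(x)\le \max_{t\le d}\frac{1}{\ell+t}=\frac{1}{\ell+1}.
\]
Therefore $0\le \frac1{\ell+1}-f_\ell(x)\le \frac1{\ell+1}-\frac1{\ell+d}$. The right-hand side is summable, with $\sum_{\ell}\bigl(\frac1{\ell+1}-\frac1{\ell+d}\bigr)=H_{d-1}$, and it does not depend on $x$. So the series converges absolutely and uniformly on $\Delta^d$ by the Weierstrass M-test.

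\emph{Continuity.} Each $f_\ell$ is a finite maximum of continuous functions, so it is continuous; note that $S_t$ is continuous because it is a maximum over subsets. By the uniform convergence above, $F$ is continuous. Concavity also follows as a limit of concave partial sums.
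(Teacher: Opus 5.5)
Your proposal is correct and follows essentially the same route as the paper: convexity of $S_t$ and hence of $f_\ell$, the bounds $\frac{1}{\ell+d}\le f_\ell(x)\le\frac{1}{\ell+1}$ giving uniform convergence of the partial sums, and the same arguments for symmetry and for appending a zero coordinate. The only cosmetic difference is that you invoke the Weierstrass M-test, whereas the paper writes out the explicit uniform tail bound $\sum_{\ell\ge T}\bigl(\frac{1}{\ell+1}-f_\ell(x)\bigr)\le\frac{d-1}{T+1}$ (a bound it reuses later for the truncation error in the polynomial-time algorithm).
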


\begin{proof}
    For each $t\in[d]$, the function $S_t(x)$ is the maximum of the sums of $t$ coordinates of $x$, and is therefore convex and continuous. Hence, $f_\ell$, being the maximum of finitely many positive multiples of the functions $S_t$, is also convex and continuous.

    Since $S_d(x)=1$ and $S_t(x)\le 1$ for every $t$, we get
    \[
        \frac{1}{\ell+d} \le f_\ell(x) \le \frac{1}{\ell+1}.
    \]
    In particular, every term in the series defining $F$ is non-negative. Moreover, for every $T\ge 0$,
    \[
        0 \le \sum_{\ell=T}^{\infty} \left( \frac{1}{\ell+1}-f_\ell(x) \right)
        \le \sum_{\ell=T}^{\infty} \left( \frac{1}{\ell+1}-\frac{1}{\ell+d} \right)
        = \sum_{a=1}^{d-1}\frac{1}{T+a}
        \le \frac{d-1}{T+1}.
    \]
    Now let
    \[
        F_T(x) \coloneqq \sum_{\ell=0}^{T-1} \left( \frac{1}{\ell+1}-f_\ell(x) \right).
    \]
    Each $F_T$ is continuous and concave, since every $f_\ell$ is continuous and convex. The tail bound above is uniform in $x$ and tends to zero as $T\to\infty$, so $F_T\to F$ uniformly on $\Delta^d$. Therefore, $F$ is continuous. Moreover, concavity passes to the limit, so $F$ is concave as well.

    Symmetry follows because $S_t(x)$ depends only on the ordered coordinates of $x$.
    Lastly, appending a zero coordinate leaves $S_t(x)$ unchanged for $t\le d$, while the additional choice $t=d+1$ gives
    \[
        \frac{S_{d+1}((x,0))}{\ell+d+1} = \frac{1}{\ell+d+1} \le \frac{1}{\ell+d} \le f_\ell(x).
    \]
    Hence, $f_\ell((x,0))=f_\ell(x)$ for every $\ell$, and therefore $F((x,0))=F(x)$.
\end{proof}

The shift identity shown in \Cref{lem:shift} has an immediate consequence for the objective $F$. Since applying $\Phi$ replaces the sequence $(f_\ell(x))_{\ell\ge0}$ by the shifted sequence $(f_{\ell+1}(x))_{\ell\ge0}$, the corresponding terms in the definition of $F$ cancel in a telescoping manner. The only term that remains is the initial value $f_0(x)$.

For the running example $x=(\frac12,\frac25,\frac1{10})$, we have $f_0(x)=\frac12$, $f_1(x)=\frac3{10}$, $f_2(x)=\frac9{40}$, and $f_3(x)=\frac9{50}$. Hence,
\[
    \left(\frac12-\frac3{10}\right) +\left(\frac3{10}-\frac9{40}\right) +\left(\frac9{40}-\frac9{50}\right) +\cdots = \frac12.
\]

All intermediate water-filling levels cancel, leaving only $f_0(x)=1/2$, which is the largest coordinate of $x$. Thus, in this example, $F(\Phi(x))-F(x)=1/2$.
The next proposition shows that this telescoping phenomenon holds for every distribution $x\in\Delta^d$.

\begin{proposition}\label{proposition:potential_telescope}
    For every $x=(x_1,\ldots,x_d)\in\Delta^d$, $F(\Phi(x))-F(x)=\max_{j \in [d]} x_j$.
\end{proposition}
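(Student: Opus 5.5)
The plan is to use the shift identity of \Cref{lem:shift} directly inside the series defining $F$, working with partial sums so that no rearrangement of a possibly conditionally convergent series is needed. Note first that $\Phi(x)\in\Delta^{d+1}$ by \Cref{lem:shift}, so $F(\Phi(x))$ is well defined; the relevant series converges by the tail bound in the proof of \Cref{lem:potential_properties}, applied with $d+1$ in place of $d$.

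For $T\ge 1$, let $F_T$ denote the $T$-th partial sum, as in the proof of \Cref{lem:potential_properties}. By \Cref{lem:shift}, $f_\ell(\Phi(x))=f_{\ell+1}(x)$ for all $\ell\ge 0$. Therefore
\[
F_T(\Phi(x))-F_T(x)=\sum_{\ell=0}^{T-1}\bigl(f_\ell(x)-f_{\ell+1}(x)\bigr)=f_0(x)-f_T(x),
\]
because the terms $\frac{1}{\ell+1}$ cancel and what remains telescopes.

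Now let $T\to\infty$. The left-hand side converges to $F(\Phi(x))-F(x)$, since both partial sums converge. On the right, the bound $\frac{1}{T+d}\le f_T(x)\le\frac{1}{T+1}$ from the proof of \Cref{lem:potential_properties} gives $f_T(x)\to 0$. Finally, $f_0(x)=\max_j x_j$ by \Cref{prop:water_level}, which concludes the proof.

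There is essentially no obstacle here. The only point needing care is to telescope the finite partial sums rather than the infinite series itself: the series $\sum_\ell f_\ell$ diverges on its own, so the cancellation of the $\frac{1}{\ell+1}$ terms must be done at the level of partial sums before passing to the limit.
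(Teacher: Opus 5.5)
Your proposal is correct and follows the paper's proof essentially verbatim. Both use the shift identity $f_\ell(\Phi(x))=f_{\ell+1}(x)$ to telescope the finite partial sums to $f_0(x)-f_T(x)$, and then let $T\to\infty$ using $f_T(x)\le 1/(T+1)$ and $f_0(x)=\max_j x_j$.
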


\begin{proof}
    Using the definition of $F$ and the shift identity shown in \Cref{lem:shift}, for every $T\ge1$,
    \[
        \sum_{\ell=0}^{T-1} \bigl(f_\ell(x)-f_\ell(\Phi(x))\bigr) = f_0(x)-f_T(x).
    \]
    Since $f_T(x)\le 1/(T+1)\to0$ (see \Cref{lem:potential_properties}), letting $T\to\infty$ gives $F(\Phi(x))-F(x)=f_0(x)=\max_{j\in[d]}x_j$.
\end{proof}

\section{A new voting rule based on harmonic entropy}
We are now ready to define our new rule, which is based on a global optimization of an objective function based on harmonic entropy.

Let $W \in \mathcal{W}$ be a committee. For a payment system $(p,r)\in\mathcal P(W)$ and each voter $i$, write
$p_i\coloneqq(p_{ic})_{c\in A_i\cap W}$, with the candidate coordinates taken in any fixed order.
Her payment vector is then $(p_i,r_i)\in\Delta^{|A_i\cap W|+1}$.
We write $F(p_i,r_i)$ and $f_\ell(p_i,r_i)$ for the corresponding functions evaluated at this vector.
\begin{definition}
    The harmonic entropy objective of a set $W\subseteq C$ is
    \[
        \mathcal V(W) = \max_{(p, r) \in \mathcal P(W)}\sum_{i \in N}F(p_i,r_i).
    \]
\end{definition}
\noindent
This is well defined since $\mathcal P(W)$ is compact and non-empty, so the maximum is attained. 

The objective function naturally gives rise to a new voting rule that selects all committees for which $\mathcal V(W)$ is maximized, i.e., $\arg\max_{W \in \mathcal{W}} \mathcal V(W)$.

We say that a payment system $(p, r) \in \mathcal P(W)$ is \emph{optimal} if $\mathcal V(W) = \sum_{i \in N}F(p_i,r_i)$.
In the next lemma, we show that there always exists an optimal payment system with useful properties.

\begin{lemma}\label{lem:optimal_payment_system}
    Let $W\subseteq C$ be a committee of arbitrary size. Then, there exists an \emph{optimal} payment system $(p^*,r^*)\in\mathcal P(W)$ satisfying:
    \begin{enumerate}
    	\renewcommand{\theenumi}{(O\arabic{enumi})}\renewcommand{\labelenumi}{\theenumi}\item \label{eq:o1} $p^*_{ic}\le r^*_i$ for every $i \in N$ and $c \in A_i \cap W$,
        \item \label{eq:o2} $\sum_{i:c\in A_i}p^*_{ic}< q$ only if $p^*_{ic}=r^*_i$ for every $i\in N$ with $c\in A_i$,
        \item \label{eq:o3} $r^*_i\ge1/(|A_i\cap W|+1)$ for every $i \in N$.
    \end{enumerate}
\end{lemma}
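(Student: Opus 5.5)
Start from an arbitrary optimal payment system; one exists because $\mathcal P(W)$ is compact and $\sum_i F(p_i,r_i)$ is continuous by \Cref{lem:potential_properties}. Among all optimal payment systems, pick one that maximizes a secondary potential, for example $\Psi(p,r)=\sum_{i\in N} r_i$, or lexicographically the vector of reserves. The optimal set is compact, since it is the preimage of a maximum under a continuous map, so such a choice exists. We then show that this choice satisfies \ref{eq:o1}--\ref{eq:o3}. In each case, a violation yields a local modification that keeps feasibility and does not decrease $\sum_i F$, but strictly increases $\Psi$.

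For \ref{eq:o1}, suppose $p_{ic}>r_i$ for some $i$ and $c\in A_i\cap W$. Move mass $\varepsilon$ from $p_{ic}$ to $r_i$. This preserves $r_i=1-\sum p_{iw}$, and it only lowers the total payment to $c$, so $\sum_{j} p_{jc}\le q$ still holds. The key claim is that moving mass from a larger coordinate to a smaller one, without crossing, does not decrease $F$. This follows from concavity together with symmetry (\Cref{lem:potential_properties}). The vector obtained by averaging $x$ with its transposition of those two coordinates is a convex combination of two points with equal $F$-value. Hence $F$ is at least $F(x)$ along the segment between $x$ and its transposition, and this segment includes the Robin Hood transfer. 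The transfer strictly increases $r_i$, contradicting the maximality of $\Psi$.

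For \ref{eq:o3}, once \ref{eq:o1} holds, every coordinate of $(p_i,r_i)$ is at most $r_i$. These $|A_i\cap W|+1$ coordinates sum to $1$, so $r_i\ge 1/(|A_i\cap W|+1)$. Thus \ref{eq:o3} follows immediately from \ref{eq:o1}.

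For \ref{eq:o2}, suppose $c\in W$ has slack $\sum_{j:c\in A_j}p_{jc}<q$, and some $i$ with $c\in A_i$ has $p_{ic}<r_i$ (by \ref{eq:o1}, the only remaining possibility is strict inequality). Move $\varepsilon$ from $r_i$ to $p_{ic}$. For small $\varepsilon$ this stays within the quota of $c$, and $p_{ic}\le r_i$ is preserved. By the same Robin Hood argument, $F(p_i,r_i)$ does not decrease. However, this decreases $\Psi=\sum r_i$, so the two cleanup steps pull the secondary potential in opposite directions. This is the main obstacle. I would resolve it by choosing a different tie-breaker: among optimal systems, maximize a strictly concave, symmetric function of each voter's vector, such as $\Psi=-\sum_i\|(p_i,r_i)\|_2^2$. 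Any Robin Hood transfer between two unequal coordinates of the same voter strictly increases this $\Psi$. Both the violation of \ref{eq:o1} and the violation of \ref{eq:o2} admit such a transfer, each feasible and each not decreasing $F$. Hence, at a maximizer of $\Psi$ over the optimal set, neither violation can occur. The remaining checks are routine: small transfers preserve all constraints of \Cref{def:payment_system}, and the Robin Hood monotonicity of $F$ follows from concavity and symmetry as described above.
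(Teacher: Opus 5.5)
Your proposal is correct and, once you drop the $\sum_i r_i$ tie-breaker in favour of the strictly convex one, it is essentially the paper's proof. The paper also picks, among optimal payment systems, one minimizing $G(p,r)=\sum_i\bigl(r_i^2+\sum_{c}p_{ic}^2\bigr)$, uses concavity plus symmetry of $F$ to show that moving two coordinates toward each other cannot decrease $F$, rules out violations of (O1) and (O2) by such feasible transfers, and derives (O3) from (O1) and the unit budget.
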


\begin{proof}
    Our goal is to choose an optimal payment system in which payments and reserves are as balanced as feasibility permits. To do so, we introduce a secondary objective $G$ with
    \[
    G(p,r)=\sum_{i\in N}\left(r_i^2+
    \sum_{c\in A_i\cap W}p_{ic}^2\right)
    \]
    and will choose an optimal payment system $(p^*,r^*)$ that minimizes $G$.
    Such a system exists as $G$ is continuous and minimized over a non-empty and compact set (noting that $\mathcal P(W)$ is non-empty and compact and the objective is continuous).
    
    By concavity of $F$, we know that moving two unequal coordinates of a voter $i$'s payment vector $(p_i,r_i)$ toward their
    average cannot decrease $F$ as the resulting vector is a convex combination
    of the original vector and the vector obtained by swapping those
    coordinates. By symmetry of $F$, both of these vectors have the same potential. Any such move also strictly decreases the sum
    of their squares (and thus $G$). Therefore, whenever such a move preserves feasibility as a payment system, the original vector could not have been $(p^*,r^*)$.
    
    First suppose that $p^*_{ic}>r^*_i$ for some $c\in A_i\cap W$.
    Replace both $p^*_{ic}$ and $r^*_i$ by their average $(p^*_{ic}+r_i^*)/2$. This preserves
    nonnegativity and the voter's unit budget. Since the payment to $c$ decreases
    and all other payments to candidates are unchanged, every candidate still receives at
    most $q$. The preceding observation gives a contradiction, proving \ref{eq:o1}.
    
    Second, let $c\in W$ receive strictly less than $q$, and suppose that
    $p^*_{ic}<r^*_i$ for some voter $i$ approving $c$. Choose $\epsilon>0$ smaller
    than both $(r^*_i-p^*_{ic})/2$ and $q-\sum_{j:c\in A_j}p^*_{jc}$.
    Replace $r^*_i$ by $r^*_i-\epsilon$ and $p^*_{ic}$ by $p^*_{ic}+\epsilon$.
    This moves both coordinates $r^*_i$ and $p^*_{ic}$ toward their average while the bound on $\epsilon$
    keeps the total payment to $c$ below $q$. Thus, all feasibility constraints are preserved and the above observation gives a contradiction.
    Together with $p_i(c)\le r_i$, this proves \ref{eq:o2}.
    
    Finally, we get $1=r^*_i+\sum_{c\in A_i\cap W}p^*_{ic}
    \le (1+|A_i\cap W|)r^*_i$ from each voter's unit budget and \ref{eq:o1}.
    Dividing by $|A_i\cap W|+1$ proves \ref{eq:o3}.
\end{proof}

From now on, whenever we refer to an optimal payment system, we assume that it satisfies \ref{eq:o1}-\ref{eq:o3}.

\subsection{Exchange bounds for candidate additions and deletions} \label{section:exchanges}

In this section, we study how the potential changes when candidates are added or removed from a committee. We establish complementary bounds in both directions: adding a candidate whose supporters carry a sufficiently large reserves (unspent budget) increases the potential by at least the quota, whereas, from a committee whose total capacity exceeds the voters' total budget, some candidate can be removed with a potential loss strictly smaller than the quota. These bounds will later provide the exchange argument needed to identify a committee maximizing the potential and to control the reserves of candidates outside it.

For a given committee $W$, define the \emph{reserve load} for candidate $c \in C \setminus W$ as $R_c \coloneqq \sum_{i\colon c\in A_i} r_i$.

\begin{lemma}\label{lem:adding_candidate}
    Let $W\subseteq C$ be a committee of arbitrary size and let $(p,r)$ be an optimal payment system satisfying \ref{eq:o1}--\ref{eq:o3}. If $R_c\ge q$, then $\mathcal V(W+c)-\mathcal V(W)\ge q$.
\end{lemma}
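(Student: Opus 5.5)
The plan is to start from the optimal payment system $(p,r)$ for $W$ and build an explicit payment system $(p',r')\in\mathcal P(W+c)$ with $\sum_i F(p'_i,r'_i)\ge \mathcal V(W)+q$. Then $\mathcal V(W+c)\ge \mathcal V(W)+q$ follows from the definition of $\mathcal V$ as a maximum. Voters not approving $c$ keep their vectors unchanged. Each supporter $i$ of $c$ will move a fraction $\lambda\coloneqq q/R_c\in(0,1]$ of the way towards the water-filled vector $\Phi(p_i,r_i)$, with the new coordinate of $\Phi$ assigned to $c$.

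\textbf{Step 1: per-voter construction.} Fix a supporter $i$ and let $x_i=(p_i,r_i)\in\Delta^{|A_i\cap W|+1}$. By \ref{eq:o1}, the largest coordinate of $x_i$ is $r_i$, so $f_0(x_i)=r_i$ and $\bar f_i\coloneqq f_1(x_i)\le f_0(x_i)=r_i$. Then $\Phi(x_i)$ has the following coordinates:
\begin{itemize}
\item $\min\{p_{iw},\bar f_i\}$ for each $w\in A_i\cap W$;
\item $\min\{r_i,\bar f_i\}=\bar f_i$ in the reserve slot;
\item $\bar f_i$ in the new slot, which we assign to $c$.
\end{itemize}
Since the reserve slot and the $c$ slot both equal $\bar f_i$, this labeling is consistent with $\Phi$ as defined; symmetry of $F$ (\Cref{lem:potential_properties}) makes the labeling immaterial anyway. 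Define
\[
(p'_i,p'_{ic},r'_i)\coloneqq(1-\lambda)\,(p_i,0,r_i)+\lambda\,\Phi(x_i),
\]
where $(p_i,0,r_i)$ denotes $x_i$ with a zero appended in the $c$ slot.

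\textbf{Step 2: feasibility.} The new vector lies in the simplex, being a convex combination of two simplex points (\Cref{lem:shift} gives $\Phi(x_i)\in\Delta$). So each voter still spends exactly one unit, with all coordinates non-negative. For $w\in W$, each supporter's payment to $w$ can only decrease, because $\Phi(x_i)_w\le p_{iw}$ by \Cref{lem:shift}; hence the cap $q$ on $w$ is preserved. For $c$, the total payment is
\[
\sum_{i:c\in A_i}\lambda\bar f_i\le\lambda\sum_{i:c\in A_i} r_i=\lambda R_c=q.
\]
This bound uses $\bar f_i\le r_i$, and it is exactly where the hypothesis $R_c\ge q$ enters, through $\lambda\le 1$. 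Thus $(p',r')\in\mathcal P(W+c)$.

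\textbf{Step 3: objective gain.} By concavity of $F$, then the zero-append invariance in \Cref{lem:potential_properties}, then \Cref{proposition:potential_telescope},
\[
F(p'_i,p'_{ic},r'_i)\ge(1-\lambda)F(x_i,0)+\lambda F(\Phi(x_i))=F(x_i)+\lambda\max_j (x_i)_j=F(x_i)+\lambda r_i.
\]
Summing over all voters gives
\[
\sum_i F(p'_i,r'_i)\ge\mathcal V(W)+\lambda R_c=\mathcal V(W)+q.
\]

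\textbf{Expected obstacle.} The delicate point is the choice of interpolation weight. Applying $\Phi$ fully would gain $R_c$ but could overpay $c$; the fraction $\lambda$ has to trade this off. The argument only closes because \ref{eq:o1} identifies the maximum coordinate as the reserve, which simultaneously makes the gain equal to $r_i$ and bounds the payment to $c$ by $r_i$.
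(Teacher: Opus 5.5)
Your proposal is correct and follows the paper's proof essentially step for step. You apply $\Phi$ to each supporter's vector and assign the new coordinate to $c$. You then interpolate with the zero-extended original system using weight $\lambda = q/R_c$ (the paper's $\alpha$). Finally, you combine concavity, zero-append invariance, and the telescoping identity $F(\Phi(x_i)) - F(x_i) = r_i$ from (O1), with the same feasibility check that $W$-payments only decrease and that $c$ receives at most $\lambda R_c = q$.
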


\begin{proof}
    For every voter $i$ approving $c$, apply the operation $\Phi$ to the payment vector $(p_i,r_i)$ and interpret the new coordinate as a payment to $c$. Since $(p,r)$ is an optimal payment system for $W$, $r_i$ is the largest coordinate of $(p_i,r_i)$ by \Cref{lem:optimal_payment_system}. Hence, by \Cref{lem:shift},
    $F(\Phi((p_i,r_i)))-F(p_i,r_i)=r_i$.
    Moreover, $\Phi$ weakly decreases all existing coordinates, while the new payment to $c$ is $f_1(p_i,r_i)\le r_i$. Voters who do not approve $c$ keep their original payment vectors.

    This produces a collection of payments and reserves whose objective exceeds $\mathcal V(W)$ by exactly $R_c$, although the total payment to $c$ may exceed $q$. To restore feasibility for $W+c$, take the convex combination of this collection and the original payment system, extended by a zero payment to $c$, with weight
    \[
        \alpha\coloneqq\frac{q}{R_c}\le1
    \]
    on the new collection. Payments to all candidates in $W$ remain feasible, while the total payment to $c$ is at most $\alpha R_c=q$.
    
    Finally, the concavity of $F$ and invariance under appending a zero coordinate imply that the resulting payment system for $W+c$ has an objective at least
    \[
        \mathcal V(W)+\alpha R_c = \mathcal V(W)+q.
    \]
    Therefore, $\mathcal V(W+c)-\mathcal V(W)\ge q$.
\end{proof}

\begin{lemma}\label{lem:deleting_candidate}
    Let $W \subseteq C$ be a committee with $\lvert W\rvert q>n$ for
$n/(k+1)<q\le n/k$. Then there exists $c \in W$ such that 
\[
\mathcal V(W)-\mathcal V(W-c)\le \frac{n}{\lvert W\rvert}.
\]
\end{lemma}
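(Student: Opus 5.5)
The plan is an averaging argument. We bound, for each $c\in W$, the loss $\mathcal V(W)-\mathcal V(W-c)$ by a quantity $L_c$ read off from one optimal payment system for $W$, and then show $\sum_{c\in W}L_c\le n$. Some $c$ must then have $L_c\le n/\lvert W\rvert$.

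\emph{Step 1 (setup).} Fix an optimal payment system $(p,r)\in\mathcal P(W)$ satisfying \ref{eq:o1}--\ref{eq:o3}, and write $P_c\coloneqq\sum_{i:c\in A_i}p_{ic}$ for the payment received by $c$. Budgets give $\sum_{c\in W}P_c\le n<\lvert W\rvert q$. So the total slack $\sum_{c\in W}(q-P_c)$ is strictly positive, and by \ref{eq:o2} every candidate with slack is paid exactly $r_i$ by each of its supporters.

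\emph{Step 2 (per-voter deletion).} For a fixed $c$, build a payment system for $W-c$ voter by voter. Voter $i$'s vector $z=(p_i,r_i)$ loses the coordinate $p_{ic}$, and that mass is put back into the remaining coordinates. The basic tool is the inverse of $\Phi$: if the deleted coordinate equals the maximum $r_i$, then moving its mass into the reserve gives a vector $x$ with $\Phi(x)=z$. By \Cref{proposition:potential_telescope}, the loss is then exactly $\max_j x_j$.

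Dumping everything into the reserve is too costly. In the uniform example with $d=2$, the two deletion losses already sum to more than $1$. So the redistributed mass has to be spread as well. It goes partly to the reserve and partly to other winners that still have slack, and the total slack available for this is exactly what the hypothesis $\lvert W\rvert q>n$ provides. The spreading should be done water-filling style, raising the smallest coordinates first, so that the resulting $x$ is again a preimage under $\Phi$ of a vector dominating $z$. The loss for voter $i$ is then controlled by the resulting maximum coordinate.

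\emph{Step 3 (aggregation).} The bound we aim for is that the total loss over all choices of $c$ is at most $\sum_i\bigl(r_i+\sum_{c}p_{ic}\bigr)=n$. Since a single redistribution may exceed the quota of some other winner, this bound would be applied to a convex combination over the different $c$, and feasibility restored by mixing with the original system as in \Cref{lem:adding_candidate}. Concavity of $F$ and its invariance under appending a zero coordinate (\Cref{lem:potential_properties}) make this mixing lossless to first order.

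The main obstacle is Step 2/3. We need a redistribution rule that respects the quotas of the remaining winners and whose losses provably sum to at most $n$ over $c\in W$. I would first try to charge the loss from deleting $c$ to $P_c$ plus a share of the slack. The strict slack from $\lvert W\rvert q>n$ is exactly what should absorb the factor-$2$ overshoot of the naive reserve-dumping bound.
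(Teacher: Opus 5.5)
There is a genuine gap. You leave the central construction (Steps 2--3) open, and the target you set for it is false. Take $k=3$, $q=n/3$, $W=\{a,b,s_1,s_2\}$, and let all $n$ voters approve exactly $\{a,b\}$, so that $\lvert W\rvert q=\frac43 n>n$. The uniform vectors $(\frac13,\frac13,\frac13)$ are feasible, so $\mathcal V(W)=\frac32 n$. In $W-a$ the total payment to $b$ is capped at $n/3$, so by concavity of $F$ the optimum gives every voter $(\frac13,\frac23)$, and $\mathcal V(W-a)=\frac56 n$. The same holds for $b$, while deleting $s_1$ or $s_2$ costs nothing. The true losses therefore sum to $\frac43 n>n$, so no upper bounds $L_c$ can satisfy $\sum_{c\in W}L_c\le n$. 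This is exactly your factor-$2$ overshoot. It cannot be absorbed, because the slack guaranteed by $\lvert W\rvert q>n$ sits on candidates these voters do not approve.

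Step 3 also cannot restore feasibility. The original system for $W$ pays $c$, so mixing with it never gives a payment system for $W-c$. The trick in \Cref{lem:adding_candidate} rests on appending a zero coordinate, which has no deletion analogue. Mixing systems for different committees $W-c$ gives a system for none of them. Finally, by \ref{eq:o2} the only winners with slack are those that each supporter already pays at her top level $r_i$. Different voters compete for the same slack, so a per-voter spreading rule needs a global feasibility argument.

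The missing idea is to choose $c$ inside a carefully chosen set of under-paid candidates, and to charge the saturated candidates only through the budget, never through their own deletion loss. The paper proceeds as follows.
\begin{enumerate}
\item Among subsets of $W_{<q}=\{c\in W:\sum_{i:c\in A_i}p_{ic}<q\}$, take $W^*$ maximizing the spare capacity $\Gamma(W')=\lvert W'\rvert q-\sum_{i:A_i\cap W'\neq\emptyset}r_i(1+\lvert A_i\cap W'\rvert)$. Every candidate outside $W_{<q}$ absorbs exactly $q$, so $\Gamma(W^*)\ge\lvert W\rvert q-n>0$.
\item Maximality of $W^*$ gives the cut condition for a max-flow/min-cut argument. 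This lets every supporter of $W^*$ move her entire reserve onto approved candidates in $W^*$ within quota.
\item The resulting payments satisfy $\hat p_{ic}\ge r_i$, and their total on $W^*$ is at most $n-q(\lvert W\rvert-\lvert W^*\rvert)\le\lvert W^*\rvert\,n/\lvert W\rvert$. Hence some $c^*\in W^*$ receives at most $n/\lvert W\rvert<q$.
\item Choosing $\hat p$ to minimize the sum of squares makes $\hat p_{ic^*}$ the largest coordinate of each supporter.
\item Delete $c^*$ and turn $\hat p_{ic^*}$ into the new reserve. This gives $x_i^-$ with $f_1(x_i^-)=r_i$ and $\Phi(x_i^-)=x_i$ up to permutation. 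By \Cref{proposition:potential_telescope} the loss is exactly $\sum_i\hat p_{ic^*}\le n/\lvert W\rvert$.
\end{enumerate}
Pre-spending the reserve is what removes the factor $2$ you identified.
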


\begin{proof}
Let $(p,r)$ be an optimal payment system for $W$ and denote by
\[
W_{<q}=\{c \in W \colon \sum_{i \colon c \in A_i}p_{ic}<q\} 
\]
the set of candidates in $W$ that receive less than $q$. Note that this set is non-empty since $\lvert W\rvert q>n$ and voters have unit budgets, i.e., $\sum_{c \in A_i}p_{ic}\le 1$ for each voter $i$.

\medskip
\noindent\emph{Step 1: $W_{<q}$ can absorbe the reserves of its supporters.}
For a given $W' \subseteq W_{<q}$, denote by $\Gamma(W_{<q})$ the remaining capacity of $W_{<q}$ after all voters $i$ that approve at least one candidate in $W_{<q}$ have additionally contributed their reserves. Formally,
\[
\Gamma(W')
=
\lvert W'\rvert q-\sum_{i \colon A_i \cap W'\neq\emptyset} \left( r_i+\sum_{c \in A_i \cap W'}p_{ic} \right)
=
\lvert W'\rvert q-\sum_{i \colon A_i \cap W'\neq\emptyset}r_i(1+\lvert A_i \cap W'\rvert),
\]
where the second equality follows from $p_{ic}=r_i$ for all $c \in A_i \cap W'$ by \cref{lem:optimal_payment_system}.

$\Gamma(W')\ge 0$ implies that the set $W'$ can additionally absorb all reserves of voters who approve at least one candidate in $W'$.

At $W'=W_{<q}$, we observe that
\begin{align*}
\Gamma(W_{<q})&=\lvert W_{<q}\rvert q-\sum_{i \colon A_i \cap W_{<q}\neq\emptyset}\left(1-\sum_{c \in A_i \cap (W\setminus W_{<q})}p_{ic}\right)
\\
&\ge \lvert W_{<q}\rvert q-\sum_{i \in N}\left(1-\sum_{c \in A_i \cap (W\setminus W_{<q})}p_{ic}\right)
\\
&=\lvert W_{<q}\rvert q-\left(n-\sum_{c \in W \setminus W_{<q}}\sum_{i \colon c \in A_i}p_{ic}\right)
\\
&=\lvert W_{<q}\rvert q-\left(n-\lvert W \setminus W_{<q}\rvert q\right)
\\
&=\lvert W \rvert q-n
\end{align*}
where the first equality holds due to the fact that every voter $i$ with $A_i \cap W_{<q}\neq \emptyset$ contributes $\sum_{c \in A_i \cap (W\setminus W_{<q})}p_{ic}$ to candidates outside of $W_{<q}$, the first inequality follows from unit budgets, the third equality follows from the fact that $\sum_{i \colon c \in A_i}p_{ic}=q$ for all $c \in W \setminus W_{<q}$, and the last inequality holds by assumption.

Hence, $W_{<q}$ can absorb the reserves of its supporters and has even some capacity left.

\medskip
\noindent\emph{Step 2: Choosing the set of candidates with the largest capacity.}
The fact that $\Gamma(W_{<q})\ge \lvert W \rvert q-n$ does not necessarily imply that each relevant reserve $r_i$ can be distributed exclusively among candidates in $A_i$. Therefore, we want to choose the subset of candidates $W^* \subseteq W_{<q}$ that maximizes $\Gamma(W')$. By the previous step, $W^* \neq \emptyset$ and $\Gamma(W^*)\ge \lvert W \rvert q-n$. Moreover, optimality yields that for every $W' \subseteq W^*$, we have $\Gamma(W^*\setminus W')\le \Gamma(W^*)$ and hence
\begin{align}\label{eq:networkoptimalitycondition}
    \sum_{i \in N}r_i \lvert A_i \cap W'\rvert+\sum_{i \colon \emptyset \neq A_i \cap W^* \subseteq W'}r_i \le \lvert W'\rvert q
\end{align}
where the left side contains the payments to candidates in $W'$ and the reserves of voters that approve some candidates in $W'$ but not in $W^* \setminus W'$.

\medskip
\noindent\emph{Step 3: Redistributing reserves.}
We now show that when choosing $W^*$, the relevant reserves can be distributed in a way such that each voter with $A_i \cap W^* \neq \emptyset$ can spend their entire reserve on approved candidates from $W^*$.

To this end, denote by $w_{ic}$ the amount of the reserve of such a voter $i$ that is redistributed to candidate $c \in A_i\cap W^*$ and let $\hat{p}$ be the new payment vector after redistributing the reserves. Thus, for $c \in A_i\cap W^*$, we get $\hat{p}_{ic}=r_i+w_{ic}$ since $p_{ic}=r_i$ by \Cref{lem:optimal_payment_system}.

We want to find a system of redistributions $(w_{ic})_{i\colon A_i \cap W^* \neq \emptyset, W^*}$ such that 
\begin{equation}\label{eq:reserve-flow}
 \begin{aligned}
 w_{ic}&\ge0 &&\text{for all }c\in A_i\cap W^*,\\
 \sum_{c\in A_i\cap W^*}w_{ic}&=r_i &&\text{for all }i,
    \\
 \sum_{i:c\in A_i}w_{ic}&\le q-\sum_{i:c\in A_i}r_i &&\text{for all }c\in W^*.
 \end{aligned}
\end{equation}
The second condition ensures that voter $i$'s entire reserve is distributed and the third condition states that no candidate's total payment exceeds $q$.

To show that (\ref{eq:reserve-flow}) has a solution, form a directed network with source $s$ and sink $t$.
For each voter $i$ with $A_i\cap W^*\neq \emptyset$, add a directed edge $s\to i$
of capacity $r_i$. For every $c\in A_i\cap W^*$, add a directed edge $i\to c$ of
unlimited capacity. Finally, for each $c\in W^*$, add a directed edge $c\to t$ of
capacity $q-\sum_{i:c\in A_i}r_i$, which is nonnegative by setting $W'=\{c\}$ in \eqref{eq:networkoptimalitycondition}. The network is illustrated in \Cref{fig:network}.

\begin{figure}[ht!]
\centering
\begin{tikzpicture}[>=Stealth,font=\small,
 voter/.style={circle,draw=blue!50!black,fill=blue!8,minimum size=8mm},
 candidate/.style={rectangle,draw=orange!70!black,fill=orange!12,minimum size=8mm},
 terminal/.style={circle,draw,minimum size=8mm},
 edge/.style={->,thick}]
 \node[terminal] (s) at (0,0) {$s$};
 \node[voter] (i) at (3,0) {$i$};
 \node[candidate] (c) at (6.5,0) {$c$};
 \node[terminal] (t) at (10.5,0) {$t$};
 \node at (3,1) {$\vdots$}; \node at (3,-1) {$\vdots$};
 \node at (6.5,1) {$\vdots$}; \node at (6.5,-1) {$\vdots$};
 \node[align=center] at (3,1.8) {voters $i$ with\\$A_i\cap W^* \neq \emptyset$};
 \node at (6.5,1.8) {candidates $c\in W^*$};
 \draw[edge] (s)--node[above] {$r_i$} (i);
 \draw[edge] (i)--node[above] {$\infty$}
                  node[below,align=center] {flow $w_{ic}$\\if $c\in A_i$} (c);
 \draw[edge] (c)--node[above] {$q-\sum_{j\colon c\in A_j}r_j$} (t);
   
\end{tikzpicture}
\caption{Edge labels denote capacities. A flow saturating every directed edge out of $s$ distributes all reserves and yields $\hat p_{ic}=r_i+w_{ic}$.}
\label{fig:network}
\end{figure}

If the maximum flow of this network equals $\sum_{i \colon A_i \cap W^* \neq \emptyset}r_i$, the corresponding weights form a solution of (\ref{eq:networkoptimalitycondition}). Investigating minimum s-t cuts, we first note that none of the corresponding cut sets can contain an edge from some $i$ to $c \in A_i \cap W^*$. Hence, whenever a minimum cut places a node $i$ on the source side, all nodes $c$ with $c \in A_i\cap W^*$ also need to be placed on that side, otherwise the minimum cut would have infinite capacity. For any minimal cut denote by $W' \subseteq W^*$ and $N' \subseteq \{i\colon A_i\cap W^* \neq \emptyset\}$. Its capacity is
\begin{align*}
\sum_{i \colon A_i \cap W^*\neq \emptyset}r_i-\sum_{i \in N'}r_i+\sum_{c \in W'}(q-\sum_{i\colon c \in A_i}r_i) &\geq 
\sum_{i \colon \emptyset \neq A_i \cap W^* \subseteq W'}r_i-\sum_{i \in N'}r_i+\sum_{c \in W'}(q-\sum_{i\colon c \in A_i}r_i)
\\
&\geq \sum_{i \colon A_i \cap W^*\neq \emptyset}r_i
\end{align*}

where the last inequality follows from (\ref{eq:networkoptimalitycondition}). Hence, there exists a minimum cut of capacity $\sum_{i \colon A_i \cap W^*}r_i$, namely the one tha places all nodes $i$ on the sink side.

By the max-flow/min-cut theorem, the maximum flow of this network equals $\sum_{i \colon A_i \cap W^* \neq \emptyset}r_i$ as desired, and the corresponding weights $(w_{ic})$ solve (\ref{eq:reserve-flow}).

Therefore, the resulting payments $\hat p_{ic}=r_i+w_{ic}$ satisfy
\begin{equation}\label{eq:redistributed-payments}
 \begin{aligned}
 \hat p_{ic}&\ge r_i &&\text{for all }i\in N,\ c\in A_i\cap W^*,\\
 \sum_{c\in A_i\cap W^*}\hat p_{ic}&=(|A_i\cap W^*|+1)r_i
    &&\text{for all }i\in N,\ A_i\cap W^* \neq \emptyset,\\
 \sum_{i:c\in A_i}\hat p_{ic}&\le q &&\text{for all }c\in W^*.
 \end{aligned}
\end{equation}
which, together with the remaining reserves $\hat r_i$, yields a payment system for $W$. Note that $\hat r_i=0$ for all voters $i$ with $A_i \cap W^*\neq \emptyset$.

\medskip
\noindent\emph{Step 4: Choosing a removable candidate.}
Given the payment system $(\hat r,\hat p)$, we next choose a candidate $c^* \in W^*$
whose removal has a small potential loss. When deleting that candidate, the payments $\hat p_{ic^*}$ will be transferred to the reserves $\hat r_i$.

Among all $\hat p$ satisfying (\ref{eq:redistributed-payments}), choose one that minimizes 
$\sum_{i \in N}\sum_{c\in A_i\cap W^*}\widehat p_{ic}^2$.
Since $r_i\le \hat p_{ic}\le 2r_i$ for all $c \in A_i \cap W^*$, and by the previous step, we optimize over a compact and non-empty set, and a minimum exists.
All of the additional payments that candidates in $W^*$ receive (when going from $p$ to $\hat p$) come from the reserves of voters who approve at least one candidate in $W^*$. Hence, the sum of total payments to $W^*$ and the relevant reserves stays the same. 
Formally,
\[
    \sum_{c\in W^*}\sum_{i:c\in A_i}\hat p_{ic}
    =
    \lvert W^*\rvert q-\Gamma(W^*).
\]

Since $\Gamma(W^*)\ge\lvert W\rvert q-n$, we obtain
\[
    \sum_{c\in W^*}\sum_{i:c\in A_i}\hat p_{ic}
    \le
    n-q\bigl(\lvert W\rvert-\lvert W^*\rvert\bigr).
\]

Moreover, $\lvert W\rvert q>n$ implies $q>n/\lvert W\rvert$, and hence
\[
    n-q\bigl(\lvert W\rvert-\lvert W^*\rvert\bigr)
    \le
    \frac{n}{\lvert W\rvert}\lvert W^*\rvert.
\]

Therefore, there exists a candidate $c^*\in W^*$ such that
\begin{equation}\label{eq:small-removal-load}
\sum_{i:c^*\in A_i}\hat p_{ic^*}
\le
\frac{n}{\lvert W\rvert}
<
q.
\end{equation}

Since $c^*$ can accept an additional payment, minimizing the above sum of squares forces its payment to be largest for every voter $i$ with $c^* \in A_i$, i.e.,
\begin{equation}\label{eq:paymax}
 \hat p_{ic^*}\ge \hat p_{ic}\qquad \text{for all }c\in A_i\cap W^*.
\end{equation}
Otherwise, a small transfer from a larger $\hat p_{ic}$ to $\hat p_{ic^*}$ respects all bounds from (\ref{eq:redistributed-payments}) while strictly decreasing the quadratic objective, contradicting optimality of $\hat p$.

\Cref{eq:paymax} helps us to apply \Cref{prop:water_level} when moving $\hat p_{ic^*}$ to voter $i$'s reserve in the next step. 

\medskip
\noindent\emph{Step 5: Redistributing payments after deleting $c^*$.}
Delete $c^*$ from $W$. For every voter $i$ with $c^*\in A_i$, start from her original payment vector and replace her reserve $r_i$ by
$\hat p_{ic^*}$ and, for every
$c\in A_i\cap(W^*\setminus{c^*})$, replace the original payment
$p_{ic}=r_i$ by $\hat p_{ic}$. Leave all other coordinates unchanged and denote the resulting payment vector by $x_i^-$.
For voters who do not approve $c^*$, leave the original payment vector unchanged.

For every voter $i$ approving $c^*$, the original reserve and all payments to candidates in $A_i\cap W^*$ are equal to $r_i$ by
\Cref{lem:optimal_payment_system}. By
\eqref{eq:redistributed-payments}, their total mass is preserved, so
$x_i^-$ is again a probability vector. Moreover, all unchanged coordinates are at most $r_i$, while
\[
    E_{x_i^-}(r_i)
    =
    \sum_{c\in A_i\cap W^*}
    (\hat p_{ic}-r_i)
    =
    r_i
\]
due to the fact that voter $i$ has additionally distributed her original reserve $r_i$ on approved candidates in $W^*$.

Hence, \Cref{prop:water_level} gives $f_1(x_i^-)=r_i$. Applying the shift operator $\Phi$ to $x_i^-$ therefore recovers the original payment vector $x_i$, up to a permutation of coordinates.

By \eqref{eq:paymax}, the largest coordinate of $x_i^-$ is
$\hat p_{ic^*}$. Thus, \Cref{proposition:potential_telescope} gives

$$
    F(x_i)-F(x_i^-)
    =
    F(\Phi(x_i^-))-F(x_i^-)
    =
    \hat p_{ic^*}.
$$

It remains to verify feasibility. For every surviving candidate
$c\in W^*\setminus{c^*}$, voters approving $c^*$ pay $\hat p_{ic}$,
whereas voters not approving $c^*$ retain their original payment
$p_{ic}=r_i\le\hat p_{ic}$. Hence, the total payment to $c$ is at most
$\sum_{i:c\in A_i}\hat p_{ic}\le q$. Candidates outside $W^*$ retain
their original payments. Thus, the constructed payment system is feasible
for $W-c^*$.

Since voters not approving $c^*$ keep their original payment vectors,

$$
\begin{aligned}
    \mathcal V(W)-\mathcal V(W-c^*)
    &\le
    \sum_{i\in N}F(x_i)-\sum_{i\in N}F(x_i^-)\\
    &=
    \sum_{i:c^*\in A_i}\hat p_{ic^*}\\
    &\le
    \frac{n}{\lvert W\rvert},
\end{aligned}
$$

where the final inequality follows from
\eqref{eq:small-removal-load}. This proves the claim.
\end{proof}

\section{Existence of a core+ committee}

We are now ready to prove our main existence result.

\begin{theorem}\label{thm:main}
    For every quota $q$ with $\frac{n}{k+1}<q\le\frac{n}{k}$, there exists a committee satisfying core+ with respect to quota $q$. In particular, every committee maximizing $\mathcal V$ among size-$k$ committees satisfies core+ with respect to quota $q$.
\end{theorem}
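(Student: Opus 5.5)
Let $W\in\mathcal W$ maximize $\mathcal V$ over size-$k$ committees, which exists since $\mathcal W$ is finite and nonempty. Fix an optimal payment system $(p,r)\in\mathcal P(W)$ satisfying \ref{eq:o1}--\ref{eq:o3} from \Cref{lem:optimal_payment_system}. By \Cref{thm:core+-equivalence}, it suffices to show $R_c=\sum_{i:c\in A_i}r_i<q$ for every $c\in C\setminus W$, because \ref{eq:o1} already provides $p_{iw}\le r_i$. I will argue by contradiction, assuming some losing candidate $c$ has $R_c\ge q$, and then perform a swap.

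\textbf{Add step.} Since $(p,r)$ satisfies \ref{eq:o1}--\ref{eq:o3}, \Cref{lem:adding_candidate} applies to the committee $W$ (of size $k$) and gives
\[
\mathcal V(W+c)\ge \mathcal V(W)+q .
\]

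\textbf{Delete step.} The committee $W+c$ has size $k+1$ and $q>n/(k+1)$, so $\lvert W+c\rvert q>n$. Hence \Cref{lem:deleting_candidate}, which allows committees of arbitrary size, yields some $c^*\in W+c$ with
\[
\mathcal V(W+c)-\mathcal V(W+c-c^*)\le \frac{n}{k+1}<q .
\]
Set $W'=W+c-c^*$. This is a size-$k$ committee, and combining the two inequalities gives $\mathcal V(W')>\mathcal V(W+c)-q\ge\mathcal V(W)$. This contradicts the maximality of $W$.

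If $c^*=c$, then $W'=W$ and we get $\mathcal V(W)>\mathcal V(W)$, which is also a contradiction. So no case split is needed. Therefore $R_c<q$ for every losing $c$, and $W$ satisfies core+, hence the core, for every quota in the stated range. The existence claim follows immediately.

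\textbf{Main obstacle.} The delicate point is checking the hypotheses of the two lemmas rather than the chain of inequalities itself. First, \Cref{lem:adding_candidate} needs an optimal payment system for $W$ satisfying \ref{eq:o1}--\ref{eq:o3}, and it must be the same system whose reserves appear in the certificate of \Cref{thm:core+-equivalence}. The convention after \Cref{lem:optimal_payment_system} guarantees this. Second, \Cref{lem:deleting_candidate} must apply to the non-$k$ committee $W+c$. It does, since both lemmas are stated for committees of arbitrary size. The strictness $n/(k+1)<q$ is exactly what makes the swap strictly improving. This explains why the theorem covers the Droop-type range and not $q=n/(k+1)$ itself.
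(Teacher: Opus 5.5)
Your proof is correct and is essentially the paper's argument. The paper packages the same add-then-delete contradiction, via \Cref{lem:adding_candidate} and \Cref{lem:deleting_candidate}, into \Cref{lem:local} with the quantitative threshold $q-\frac{n}{k+1}$, which it reuses for the polynomial-time local search, and then notes that a global maximizer satisfies that lemma's hypothesis; you inline the argument directly for the maximizer, and your treatment of the case $c^*=c$ matches the paper's remark that this swap changes nothing.
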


The exchange bounds from the previous section yield a quantitative local criterion for core+. Suppose that some losing candidate $d$ has a reserve load of at least $q$. Adding $d$ then increases the objective by at least $q$, while the deletion lemma allows us to remove one candidate from the resulting $(k+1)$-candidate committee at a loss of at most $n/(k+1)$. Hence, some single-candidate replacement improves the objective by at least $q-n/(k+1)$. Conversely, if no replacement achieves such an improvement, every losing candidate has a reserve load strictly below $q$. Together with \ref{eq:o1}, this gives precisely the payment-system characterization of core+ from \Cref{thm:core+-equivalence}.

\begin{lemma}\label{lem:local}
    Fix a quota $q$ with $\frac{n}{k+1}<q\le\frac{n}{k}$. Let $W\in\mathcal W$. If
    \[
        \mathcal V((W-c)+d)-\mathcal V(W) < q-\frac{n}{k+1}
    \]
    for every $c\in W$ and $d\in C\setminus W$, then $W$ satisfies core+.
\end{lemma}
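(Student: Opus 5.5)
We argue by contraposition, using the payment-system characterization of core+ in \Cref{thm:core+-equivalence}. Fix an optimal payment system $(p,r)\in\mathcal P(W)$ satisfying \ref{eq:o1}--\ref{eq:o3}, which exists by \Cref{lem:optimal_payment_system}. Property \ref{eq:o1} already gives $p_{iw}\le r_i$ for all $i$ and $w\in W\cap A_i$. Hence, by \Cref{thm:core+-equivalence}, it suffices to show that every losing candidate $d\in C\setminus W$ has reserve load $R_d=\sum_{i:d\in A_i} r_i<q$. Suppose instead that some $d\in C\setminus W$ has $R_d\ge q$. We will exhibit $c\in W$ with $\mathcal V((W-c)+d)-\mathcal V(W)\ge q-\frac{n}{k+1}$, contradicting the hypothesis.

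\textbf{Adding $d$.} We apply \Cref{lem:adding_candidate} to $W$ with this optimal payment system. It is stated for committees of arbitrary size and needs only \ref{eq:o1}--\ref{eq:o3} and $R_d\ge q$. It gives $\mathcal V(W+d)\ge\mathcal V(W)+q$.

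\textbf{Deleting some $c$.} The committee $W+d$ has size $k+1$, and $(k+1)q>n$ because $q>\frac{n}{k+1}$. Hence \Cref{lem:deleting_candidate} applies to $W+d$. It yields a candidate $c\in W+d$ with
\[
\mathcal V(W+d)-\mathcal V(W+d-c)\le \frac{n}{k+1}.
\]
Combining the two bounds gives
\[
\mathcal V(W+d-c)-\mathcal V(W)\ge q-\frac{n}{k+1}.
\]

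\textbf{The case $c=d$.} The one subtle point is that the deletion lemma might return $c=d$. Then $W+d-c=W$, and the combined inequality reads $0\ge q-\frac{n}{k+1}>0$, which is impossible. So necessarily $c\in W$, and $(W-c)+d$ is a genuine single swap violating the hypothesis of \Cref{lem:local}. Therefore no losing candidate has $R_d\ge q$, and \Cref{thm:core+-equivalence} shows that $W$ satisfies core+.

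I expect no step to be a serious obstacle, since the two exchange lemmas do the work. The only things to check are that both lemmas apply to committees of the relevant sizes ($k$ and $k+1$), and that the strict inequality $q>\frac{n}{k+1}$ rules out the degenerate choice $c=d$.
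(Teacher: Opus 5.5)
Your proof is correct and follows the paper's argument essentially step for step. You fix an optimal payment system satisfying \ref{eq:o1}--\ref{eq:o3} and assume some losing $d$ has $R_d\ge q$; you then apply \Cref{lem:adding_candidate} to $W$ and \Cref{lem:deleting_candidate} to the size-$(k+1)$ committee $W+d$, rule out $c=d$ using $q>n/(k+1)$, and conclude via \ref{eq:o1} and \Cref{thm:core+-equivalence}.
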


\begin{proof}
    Take an optimal payment system $(p,r)\in\mathcal P(W)$ satisfying \ref{eq:o1}--\ref{eq:o3}. We show that $R_d=\sum_{i:d\in A_i}r_i<q$ for every $d\notin W$.
    
    Suppose, for contradiction, that $R_d\ge q$ for some $d\notin W$.
    By \Cref{lem:adding_candidate}, $\mathcal V(W+d)-\mathcal V(W)\ge q$.
    Since $|W+d|=k+1$ and $(k+1)q>n$, \Cref{lem:deleting_candidate} gives some $c\in W+d$ such that
    \[
        \mathcal V(W+d)-\mathcal V((W+d)-c) \le \frac{n}{k+1}.
    \]
    Combining the two inequalities yields
    \[
        \mathcal V((W+d)-c)-\mathcal V(W) \ge q-\frac{n}{k+1}.
    \]
    Since $q>n/(k+1)$, we cannot have $c=d$, as this would make the left-hand side equal to zero. Hence, $c\in W$ and $(W+d)-c=(W-c)+d$, contradicting the hypothesis of the lemma.
    Therefore, $R_d<q$ for every $d\notin W$.
    
    By \ref{eq:o1}, the optimal payment system also satisfies $p_{iw}\le r_i$ for every $i\in N$ and $w\in A_i\cap W$. Thus, $(p,r)$ satisfies both conditions in \Cref{thm:core+-equivalence}. Hence, $W$ satisfies core+.
\end{proof}

In particular, every local maximum of $\mathcal V$ under single-candidate replacements satisfies core+, since $q-n/(k+1)>0$. This immediately yields the existence result.

\begin{proof}[Proof of \Cref{thm:main}]
    Since there are finitely many size-$k$ committees, there exists a committee $W$ maximizing $\mathcal V$ among size-$k$ committees. Therefore, for every $c\in W$ and $d\in C\setminus W$,
    \[
        \mathcal V((W-c)+d)-\mathcal V(W) \le 0 < q-\frac{n}{k+1}.
    \]
    Hence, $W$ satisfies the hypothesis of \Cref{lem:local} and therefore satisfies core+.
\end{proof}

We can also extend this result to the Droop quota.

\begin{theorem}\label{thm:droop}
    For every election instance $I$, there exists a committee $W\in\mathcal W$
    for which there is no fractional objection satisfying
    \[
        \sum_{i\in N}y_i > \frac{n}{k+1}\sum_{c\in C}z_c.
    \] 
    In particular, core+ is non-empty under the strict Droop quota.
\end{theorem}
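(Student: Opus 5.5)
Since there are finitely many size-$k$ committees and the Droop threshold is a strict inequality, I will obtain the statement as a limit of \Cref{thm:main} applied with quotas $q$ slightly above $n/(k+1)$. A \emph{fractional objection} means a solution $(x,y,z)$ of the system \eqref{eq:program-core+-fail} with the quota constraint replaced by the stated inequality. For $q\in(\frac{n}{k+1},\frac nk]$, let $\mathcal C_q\subseteq\mathcal W$ denote the set of committees satisfying core+ with respect to $q$. By \Cref{thm:main}, each $\mathcal C_q$ is non-empty.

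\textbf{Step 1.} Fix a sequence $q_m\downarrow \frac{n}{k+1}$ with $q_m\le \frac nk$. Since $\mathcal W$ is finite, some committee $W$ lies in $\mathcal C_{q_m}$ for infinitely many $m$. I claim this $W$ works.

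\textbf{Step 2.} Suppose, for contradiction, that $(x,y,z)$ is a fractional objection to $W$ with $\sum_i y_i>\frac{n}{k+1}\sum_c z_c$. Two cases arise.
\begin{itemize}
\item If $\sum_c z_c>0$, then the ratio $\sum_i y_i/\sum_c z_c$ is strictly greater than $\frac{n}{k+1}$. Hence for $m$ large enough (along the subsequence), $\sum_i y_i\ge q_m\sum_c z_c$. So $(x,y,z)$ is a solution of \eqref{eq:program-core+-fail} for quota $q_m$, contradicting $W\in\mathcal C_{q_m}$.
\item If $\sum_c z_c=0$, then all $z_c=0$, which forces all $x_{ic}=0$. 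The assignment constraints then give $(u_i(W)+1)y_i=0$, so $y\equiv0$. This contradicts $\sum_i y_i>0$, which is implied by the strict inequality $\sum_i y_i > 0=\frac{n}{k+1}\sum_c z_c$.
\end{itemize}
If fractional objections are instead meant to keep the constraint $\sum_i y_i>0$ separately, the same argument applies unchanged.

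\textbf{Step 3 (main care point).} I must make sure the definition of objection in the statement matches \eqref{eq:program-core+-fail} apart from the quota constraint; in particular, the variables may range over $[0,1]$ or over $\R_{\ge0}$ by homogeneity, as in the proof of \Cref{thm:core+-equivalence}. No uniformity in $q$ is needed, because a fixed objection has a fixed ratio exceeding $\frac{n}{k+1}$. An alternative route would use \Cref{lem:local} directly for a $\mathcal V$-maximizer at quota $q=\frac{n}{k+1}$. However, \Cref{lem:deleting_candidate} requires $|W|q>n$, which fails at $q=\frac{n}{k+1}$, so the limiting argument above is the safer approach. The final sentence of the statement, that core+ is non-empty under the strict Droop quota, is a restatement of the first part.
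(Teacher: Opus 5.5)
Your proposal is correct and follows the paper's proof essentially verbatim: you take quotas $q_m\downarrow n/(k+1)$, pigeonhole over the finite set $\mathcal W$, and observe that a fixed objection has ratio $\sum_i y_i/\sum_c z_c$ strictly above $n/(k+1)$, so it already blocks at some $q_m$. Your explicit derivation that $\sum_c z_c=0$ forces $y\equiv 0$ simply spells out what the paper dismisses with the parenthetical ``otherwise there is no core+ violation''.
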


\begin{proof}
    Let $(q_t)_{t\ge1}$ be any sequence satisfying
    \[  
        \frac{n}{k+1}<q_t<\frac{n}{k}
        \qquad\text{and}\qquad
        q_t\longrightarrow\frac{n}{k+1}.
    \]
    For every $t$, the preceding existence result provides a committee $W_t\in\mathcal W$ satisfying core+ with quota $q_t$.
    
    Since $\mathcal W$ is finite, some committee $W$ occurs infinitely often in the sequence $(W_t)_{t\ge1}$. We claim that $W$ satisfies core+ under the strict Droop quota.
    
    Suppose, to the contrary, that there exists a fractional objection $(x,y,z)$ to $W$ such that
    \[
        \sum_{i\in N}y_i > \frac{n}{k+1}\sum_{c\in C}z_c.
    \]
    Writing $Y\coloneqq\sum_i y_i$ and $Z\coloneqq\sum_c z_c$, we have $Z>0$ (otherwise there is no core+ violation), and therefore
    \[
        Y>\frac{n}{k+1}\ Z.
    \]
    Since $q_t\to n/(k+1)$, for all sufficiently large $t$ we have $q_t<Y/Z$, and hence, $Y>q_t Z$.
    Choose a sufficiently large $t$ for which $W_t=W$ (this can be done as $W$ is chosen infinitely often). But then the same fractional objection blocks $W$ at quota $q_t$, contradicting the choice of $W_t$. 
    Hence, no such fractional objection exists, and $W$ satisfies core+ under the strict Droop quota.
\end{proof}

\section{Polynomial-time computation}

The quantitative version of \Cref{lem:local} suggests a natural local-search algorithm, similar to PAV's. If a committee does not satisfy core+, then some single-candidate replacement increases its objective by a fixed positive amount. Thus, it suffices to evaluate the committee's objective accurately enough to detect such improvements.

For the remainder of this section, fix

\[
    q \coloneqq \frac12\left(\frac{n}{k}+\frac{n}{k+1}\right) 
    \qquad\text{and}\qquad 
    \delta \coloneqq q-\frac{n}{k+1} = \frac{n}{2k(k+1)}.
\]

In particular, $n/(k+1)<q<n/k$. By \Cref{lem:local}, if a size-$k$ committee $W$ does not satisfy core+, then there exist $c\in W$ and $d\in C\setminus W$ such that 
\[
    \mathcal V((W-c)+d)-\mathcal V(W)\ge\delta
\]

We will approximate $\mathcal V$ with error at most $\delta/4$. This is sufficiently accurate to ensure that every such improving swap remains detectable.
To this end, for $T\in\N$, we truncate the infinite series in $F$ after $T$ terms and set
\[
    F_T(x) \coloneqq \sum_{\ell=0}^{T-1} \left( \frac{1}{\ell+1}-f_\ell(x) \right),
    \quad \text{ and } \quad
    \mathcal V_T(W) \coloneqq \max_{(p,r)\in\mathcal P(W)} \sum_{i\in N}F_T(x_i).
\]

For a size-$k$ committee, each payment vector $x_i$ has at most $k+1$ coordinates. The tail bound used in \Cref{lem:potential_properties} therefore gives $0\le F(x_i)-F_T(x_i)\le k/(T+1)$ for every payment system. Summing over all voters yields
\begin{equation}\label{eq:score_error}
    0 \le \mathcal V(W)-\mathcal V_T(W) \le \frac{nk}{T+1}.
\end{equation}
Thus, choosing $T$ to be polynomially large allows us to approximate the committee objective with inverse-polynomial accuracy.

We next show that $\mathcal V_{T}(W)$ can be computed by a linear program of polynomial size. Fix a size-$k$ committee $W$, and let $d_i\coloneqq |A_i\cap W|+1$ denote the number of coordinates in voter $i$'s payment vector $x_i=(x_{i1},\ldots,x_{id_i})$.

By \Cref{prop:water_level}, $f_\ell(x_i)$ is exactly the threshold at which the excess mass $E_{x_i}(\tau)$ equals $\ell\cdot\tau$. Further, since the excess mass decreases as $\tau$ increases, the inequality $E_{x_i}(\tau)\le \ell \cdot\tau$ holds precisely when $\tau\ge f_\ell(x_i)$.
This gives
\[
    \tau \ge f_\ell(x_i)
    \quad\Longleftrightarrow\quad
    \sum_{j=1}^{d_i}(x_{ij}-\tau)_+\le \ell \cdot \tau.
\]
For $\ell=0$, the same equivalence follows from $f_0(x_i)=\max_j x_{ij}$.

The idea is now to represent each water-filling level $f_\ell(x_i)$ by a variable $t_{i \ell}$.
By the above characterization, it is sufficient to enforce $t_{i \ell} \geq f_\ell(x_i)$. 
The only non-linear terms in this expression are $(x_{ij}-t_{i\ell})_+$, which we linearize by the auxiliary variables $a_{i \ell j}$.
Consequently, in addition to the linear constraints defining $(p,r)\in\mathcal P(W)$ given in \Cref{def:payment_system}, we impose
\begin{align*}
    a_{i\ell j}\ge0, \qquad a_{i\ell j}\ge x_{ij}-t_{i\ell}, \qquad
    t_{i\ell}\ge0, \qquad \sum_{j=1}^{d_i}a_{i\ell j}\le \ell t_{i\ell},
\end{align*}
for every $i\in N$, $0\le\ell<T$, and $j\in[d_i]$.

The first two constraints imply $a_{i\ell j}\ge (x_{ij}-t_{i\ell})_+$, and the smallest admissible value is therefore $a_{i\ell j}=(x_{ij}-t_{i\ell})_+$. Consequently, there exist auxiliary variables satisfying the final constraint if and only if $\sum_{j=1}^{d_i}(x_{ij}-t_{i\ell})_+\le \ell t_{i\ell}$, which, by the characterization above, is equivalent to $t_{i\ell}\ge f_\ell(x_i)$.
We can therefore represent the truncated committee objective by the linear objective
\[
    \max_{(p,r)\in \mathcal{P}_q(W), t, a} \quad nH_T-\sum_{i\in N}\sum_{\ell=0}^{T-1}t_{i\ell},
\]
subject to the constraints above.
For fixed payment allocations, each $t_{i\ell}$ enters the objective with a negative coefficient and is constrained to satisfy $t_{i\ell}\ge f_\ell(x_i)$. Hence, maximizing the objective chooses its smallest feasible value, namely $t_{i\ell}=f_\ell(x_i)$. The resulting objective value is therefore exactly $\sum_{i\in N}F_T(x_i)$.
Maximizing further over all $(p,r)\in\mathcal P(W)$ therefore yields exactly $\mathcal V_{T}(W)$.

The linear program has $O(nkT)$ variables and constraints, and all coefficients are rational with bit length polynomial in the input size and $T$. The constant $nH_T$ need not be included in the optimization itself and can be added afterward. Hence, $\mathcal V_T(W)$ can be computed in time polynomial in the input size and $T$.

\subsection{Local search}
We can now combine the approximation with the quantitative improvement guarantee from \Cref{lem:local}.

\begin{theorem}\label{thm:polynomial}
    A committee satisfying core+ can be computed in polynomial time.
    More precisely, for $q=\frac12\left(\frac{n}{k}+\frac{n}{k+1}\right)$, the algorithm below performs at most $4k(k+1)H_k$ candidate replacements.
\end{theorem}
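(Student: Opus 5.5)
Our plan is to run local search on the truncated objective $\mathcal V_T$ with $T$ chosen so that the truncation error $nk/(T+1)$ from \eqref{eq:score_error} is at most $\delta/4$. Since $\delta = n/(2k(k+1))$, it suffices to take $T \coloneqq 2k^2(k+1)\cdot 4 = 8k^2(k+1)$, which is polynomial. The algorithm starts from an arbitrary size-$k$ committee $W_0$. In each round it computes $\mathcal V_T(W)$ and $\mathcal V_T((W-c)+d)$ for all $c\in W$ and $d\in C\setminus W$; each is a single polynomial-size LP, and there are at most $k(m-k)$ of them. If some swap satisfies $\mathcal V_T((W-c)+d)-\mathcal V_T(W)\ge \delta/2$, the algorithm performs it. Otherwise it stops and outputs $W$.

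\textbf{Correctness at termination.} Suppose the algorithm stops, so every swap has truncated gain below $\delta/2$. Using $0\le \mathcal V-\mathcal V_T\le\delta/4$ on both committees, the true gain satisfies
\[
\mathcal V((W-c)+d)-\mathcal V(W) < \tfrac{\delta}{2}+\tfrac{\delta}{4} < \delta = q-\tfrac{n}{k+1}.
\]
By \Cref{lem:local}, $W$ therefore satisfies core+ with respect to this $q$. It also satisfies core+ for the Hare quota, since a fractional objection at quota $n/k$ is also one at the smaller $q$. Because the statement only requires some core+ notion, one may equally output core+ with respect to $q$.

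\textbf{Bounding the number of replacements.} Each performed swap increases $\mathcal V_T$ by at least $\delta/2$, so it increases $\mathcal V$ by at least $\delta/2-\delta/4=\delta/4$. For the potential range we use the bounds $0\le F(x)\le H_{d-1}$ from the discussion after the definition of harmonic entropy, together with concavity and symmetry: the maximum on $\Delta^d$ is attained at the uniform vector. Each payment vector has $d\le k+1$ coordinates, so $0\le \mathcal V(W)\le nH_k$ for every size-$k$ committee. Because $\mathcal V$ is not monotone along the path while $\mathcal V_T$ is, it is cleaner to track $\mathcal V_T$ itself, which lies in $[0,nH_k]$ and increases by at least $\delta/2$ per swap. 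This gives at most
\[
\frac{nH_k}{\delta/2} = 4k(k+1)H_k
\]
replacements, which matches the claimed bound.

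\textbf{Main obstacle: exact LP arithmetic.} Each round is polynomial: there are $O(km)$ LPs, each with $O(nkT)=O(nk^4)$ variables and rational data, solvable exactly by the ellipsoid or interior-point method. The difficulty is the exactness of the comparisons. If LP values are only approximated, the thresholds should be shifted by a further small slack, for instance $\delta/16$, absorbed into the $\delta/4$ budget by taking $T$ a constant factor larger. Since the LP optimum is rational of polynomial bit length, exact solving sidesteps this. The other point to verify is that $\mathcal V_T\ge 0$: each term $1/(\ell+1)-f_\ell\ge 0$, so $\mathcal V_T\ge0$ and $\mathcal V_T\le\mathcal V\le nH_k$, which closes the potential argument.
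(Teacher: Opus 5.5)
Your proposal is correct and follows the paper's proof essentially step for step. It uses the same $T=8k^2(k+1)$, the same $\delta/2$ acceptance threshold, the same termination argument via \Cref{lem:local}, and the same replacement count obtained by tracking $\mathcal V_T\in[0,nH_k]$. Your extra observation that core+ at this $q$ implies core+ at the Hare quota is correct, and the paper leaves it implicit.

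One small slip: your claim that $\mathcal V$ is not monotone along the path contradicts your own preceding line, which shows that $\mathcal V$ rises by at least $\delta/4$ per swap. Tracking $\mathcal V_T$ rather than $\mathcal V$ matters only because it gives the sharper constant $4k(k+1)H_k$ instead of $8k(k+1)H_k$.
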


\begin{proof}
    If $k=m$, return $C$. In this case, there is no candidate outside the committee, and $C$ satisfies core+.
    
    Otherwise, set $T\coloneqq 8k^2(k+1)$. Since $\delta=n/(2k(k+1))$, \Cref{eq:score_error} gives
    \[
        0 \le \mathcal V(W)-\mathcal V_T(W) \le \frac{nk}{T+1} = \frac{nk}{8k^2(k+1)+1} \leq \frac{\delta}{4}
    \]
    for every size-$k$ committee $W$.
    
    We now start from an arbitrary size-$k$ committee $W$. Compute $\mathcal V_T(W)$ and, for every $c\in W$ and $d\in C\setminus W$, the truncated objective of $W'=(W-c)+d$. If some replacement satisfies
    \[
        \mathcal V_T(W') \ge \mathcal V_T(W)+\frac{\delta}{2},
    \]
    replace $W$ by $W'$ and repeat. If no such replacement exists, return $W$.
    
    We first bound the number of replacements. Since $F_T(x)\le F(x)$ and $F$ is maximized at the uniform distribution (see \Cref{section:harmonic_entropy}), every payment vector $x_i$ with at most $k+1$ coordinates satisfies $0\le F_T(x_i)\le H_k$. Hence, $0\le\mathcal V_T(W)\le nH_k$ for every size-$k$ committee $W$.
    Each replacement increases $\mathcal V_T$ by at least $\delta/2$. Therefore, if the algorithm performs $R$ replacement steps, then $R \frac{\delta}{2} \leq nH_k$. Rearranging and again using $\delta=n/(2k(k+1))$ gives
    \[
        R \leq \frac{2nH_k}{\delta} = 4k(k+1)H_k.
    \]
    In each iteration, at most $k(m-k)$ neighboring committees are evaluated. Since $T=O(k^3)$ and each truncated objective is computed by a linear program, the overall procedure runs in polynomial time.

    At termination, every replacement $W'=(W-c)+d$ satisfies
    \[
        \mathcal V_T(W')-\mathcal V_T(W)<\frac{\delta}{2}.
    \]
    By \Cref{eq:score_error}, $\mathcal V(W') \le \mathcal V_T(W')+\frac{\delta}{4}$, and $\mathcal V(W) \ge \mathcal V_T(W)$.
    Therefore,
    
    \[
        \mathcal V(W')-\mathcal V(W) \le \mathcal V_T(W')+\frac{\delta}{4} -\mathcal V_T(W) < \frac{\delta}{2}+\frac{\delta}{4} = \frac{3\delta}{4} < \delta.
    \]    
    Thus, every single-candidate replacement increases the true objective by strictly less than $\delta$. Since $\delta=q-n/(k+1)$, the hypothesis of \Cref{lem:local} is satisfied. Hence, the returned committee $W$ satisfies core+.
\end{proof}

\section{Conclusion}
We introduce a new voting rule based on a new harmonic entropy objective function. We are currently exploring its properties beyond core+, and it appears to be an overall very attractive rule. We hope that the concept of harmonic entropy may find other applications in the study of proportional representation.

\section*{Acknowledgements}
We thank Fabian Frank for insightful discussions on adaptive payment schemes.
The voting rule we present and the proof that it satisfies core+ were found by GPT-6 Astra in a lengthy interactive session. 
We initially aimed to find an approximate core outcome. Starting from a Lindahl-equilibrium rounding approach, GPT-6 Astra quickly achieved an approximation factor around $2.065$. Repeated prompts to improve the bound, seek alternative potential functions, and to exploit KKT conditions of Lindahl equilibria eventually led to the entropy-based framework in which we optimize continuous voter payments and compare candidate additions with capacity-preserving deletions.
We suggested aiming for core+ and not just core. Despite these ideas being developed by GPT-6 Astra, we invested significant effort into verifying them and making them accessible to humans, in particular by rewriting most of the proofs.

\printbibliography

\end{document}